\newcommand{\N}{\mathbb{N}}
\newcommand{\R}{\mathbb{R}}
\newcommand{\E}{\mathbb{E}}
\renewcommand{\P}{\mathbb{P}}
\newcommand{\bX}{\mathbf{X}}
\newcommand{\cA}{\mathcal{A}}
\newcommand{\cL}{\mathcal{L}}
\newcommand{\cT}{\mathcal{T}}
\newcommand{\cX}{\mathcal{X}}
\newcommand{\cY}{\mathcal{Y}}
\newcommand{\cPY}{\mathcal{P}(\mathcal{Y})}
\newcommand{\myE}{\mathbb{E}}
\newcommand{\myQ}{\mathbb{P}}
\newcommand{\myS}{\operatorname{S}}
\newcommand{\myT}{\operatorname{T}}
\newcommand{\Deltam}{\Delta_{m-1}}
\newcommand{\Sym}{\operatorname{Sym}}
\newcommand{\one}{\mathbbm{1}}
\DeclareMathOperator*{\argmin}{arg\,min}
\DeclareMathOperator*{\argmax}{arg\,max}
\theoremstyle{plain}
\newtheorem{Thm}{Theorem}[section]
\newtheorem{Prop}[Thm]{Proposition}
\newtheorem{Lem}[Thm]{Lemma}
\newtheorem{theorem}[Thm]{Theorem}
\theoremstyle{definition}
\newtheorem{Ex}[Thm]{Example}
\newtheorem{example}[Thm]{Example}
\begin{document}

\title{From Classification Accuracy 
		to 
		Proper Scoring Rules: Elicitability of Probabilistic Top List Predictions} 
\author{Johannes Resin\thanks{This work has been supported by the Klaus Tschira Foundation. The author gratefully acknowledges financial support from the German Research Foundation (DFG) through grant number 502572912. The author would like to thank Timo Dimitriadis, Tobias Fissler, Tilmann Gneiting, Alexander Jordan, Sebastian Lerch and Fabian Ruoff for helpful comments and discussion.}\\[2mm]
	Heidelberg University \\
	Heidelberg Institute for Theoretical Studies} 

\maketitle

\begin{abstract} \noindent
		In the face of uncertainty, the need for probabilistic assessments has long been recognized in the literature on forecasting. In classification, however, comparative evaluation of classifiers often focuses on predictions specifying a single class through the use of simple accuracy measures, which disregard any probabilistic uncertainty quantification. I propose probabilistic top lists as a novel type of prediction in classification, which bridges the gap between single-class predictions and predictive distributions. The probabilistic top list functional is elicitable through the use of strictly consistent evaluation metrics. The proposed evaluation metrics are based on symmetric proper scoring rules and admit comparison of various types of predictions ranging from single-class point predictions to fully specified predictive distributions. The Brier score yields a metric that is particularly well suited for this kind of comparison.
\end{abstract}

\section{Introduction}

In the face of uncertainty, predictions ought to quantify their level of confidence \citep{GK14}. This has been recognized for decades in the literature on weather forecasting \citep{Brier1950, Murphy1977} and probabilistic forecasting \citep{Daw84,GR07}. Ideally, a prediction specifies a probability distribution over potential outcomes. Such predictions are evaluated and compared by means of proper scoring rules, which quantify their value in a way that rewards truthful prediction \citep{GR07}. In statistical classification and machine learning, the need for reliable uncertainty quantification has not gone unnoticed, as exemplified by the growing interest in the calibration of probabilistic classifiers \citep{GPSW17,VWA+19}. However, classifier evaluation often focuses on the most likely class (i.e., the mode of the predictive distribution) through the use of classification accuracy and related metrics derived from the confusion matrix \citep{Tha20,HB21}.

Probabilistic classification separates the prediction task from decision making. This enables informed decisions that account for diverse cost-loss structures, for which decisions based simply on the most likely class may lead to adverse outcomes \citep{Elkan01,G17}. Probabilistic classification 
is a viable alternative to classification with reject option, where classifiers may refuse to predict a class if their confidence in a single class is not sufficient \citep{HW06,NCHS19}.

In this paper, I propose \emph{probabilistic top lists} as a way of producing probabilistic classifications in settings where specifying entire predictive distributions may be undesirable, impractical or even impossible. While multi-label classification serves as a key example of such a setting, the theory presented here applies to classification in general. I envision the probabilistic top list approach to be particularly useful in settings eluding traditional probabilistic forecasting, where the specification of probability distributions on the full set of classes is hindered by a large number of classes and missing (total) order. Consistent evaluation is achieved through the use of proper scoring rules. 

Whereas in traditional classification an instance is associated with a single class (e.g., \emph{cat} or \emph{dog}), multi-label classification problems \citep[as reviewed by][]{TK07,ZZ14,TGM21} admit multiple labels for an instance (e.g., \emph{cat} or \emph{dog} or \emph{cat and dog}).\footnote{Multi-label classification is a special case of classification if classes are (re-)defined as subsets of labels.}
Applications of multi-label classification include text categorization \citep{ZZ06}, image recognition \citep{CWWG19} and functional genomics \citep{BST06,ZZ06}. Multi-label classification methods often output confidence scores for each label independently and the final label set prediction is determined by a simple cut-off \citep{ZZ14}. As this does not take into account label correlations, computing label set probabilities in a postprocessing step can improve predictions and probability estimates \citep{LPAWQ20} over simply multiplying probabilities to obtain label set probabilities.
Probabilistic top lists offer a flexible approach to multi-label classification, which embraces the value of probabilistic information. In fact, the \emph{BR-rerank} method introduced by \citet{LPAWQ20} produces top list predictions. Yet, comparative performance evaluation focuses on (set) accuracy and the improper instance F1 score. This discrepancy has been a key motivation for this research.

In probabilistic forecasting, a scoring rule assigns a numerical score to a predictive distribution based on the true outcome \citep{GR07}. It is proper if the expected score is optimized by the true distribution of the outcome of interest. Popular examples in classification are the Brier (or quadratic) score and the logarithmic (or cross entropy) loss \citep{GR07,HB21}. When one is not interested in full predictive distributions, simple point predictions are frequently preferred. A meaningful point prediction admits interpretation in terms of a statistical functional \citep{Gneiting2011}. Point predictions are evaluated by means of consistent scoring or loss functions. Similar to proper scoring rules, a scoring function is consistent for a functional if the expected score is optimized by the true functional value of the underlying distribution. For example, accuracy (or, equivalently, misclassification or zero-one loss) is consistent for the mode in classification \citep{G17}.

Probabilistic top lists bridge the gap between mode forecasts and full predictive distributions in classification. In this paper, I define a probabilistic top-$k$ list as a collection of $k$ classes deemed most likely together with confidence scores quantifying the predictive probability associated with each of the $k$ classes. The key question tackled in this work is how to evaluate such top list predictions in a consistent manner. To this end, I propose what I call \emph{padded symmetric scores}, which are based on proper symmetric scoring rules. 
I show that the proposed padded symmetric scores are consistent for the probabilistic top-$k$ list functional. The padded symmetric score of a probabilistic top list prediction is obtained from a symmetric proper scoring rule by padding the top list to obtain a fully specified distribution. The padded distribution divides the probability mass not accounted for by the top list's confidence scores equally among the classes that are not included in the list. Padded symmetric scores exhibit an interesting property, which allows for balanced comparison of top lists of different length, as well as single-class point predictions and predictive distributions. Notably, the expected score of a correctly specified top list only depends on the top list itself and is invariant to other aspects of the true distribution.
Comparability of top lists of differing length is ensured, as the expected score does not deteriorate upon increasing the length of the predicted top list. Nonetheless, if the scoring function is based on the Brier score, there is little incentive to provide unreasonably large top lists.
In the case of a single-class prediction, the padded version of the Brier score reduces to twice the misclassification loss. 
Hence, the padded Brier score essentially generalizes classification accuracy.

The remainder of the paper proceeds as follows. Section \ref{Sec:StatClass} recalls the traditional multi-class classification problem with a focus on probabilistic classification and suitable evaluation metrics. A short introduction to the multi-label classification problem is also provided. Section \ref{Sec:TopLists} introduces probabilistic top lists and related notation and terminology used throughout this work. Section \ref{Sec:MathPrelim} introduces some preliminary results on symmetric proper scoring rules and some results relating to the theory of majorization. These results are used in Section \ref{Sec:PaddedScores} to show that the padded symmetric scores yield consistent scoring functions for the top list functionals. Section \ref{Sec:Comp} discusses the comparison of various types of predictions using the padded Brier and logarithmic scores. A theoretical argument as well as numerical examples illustrate that the padded Brier score is well suited for this task. Section \ref{Sec:ConclusionTopLists} concludes the paper.

\section{Statistical Classification}
\label{Sec:StatClass}

The top list functionals and the proposed scoring functions are motivated by multi-label classification, but they apply to other classification problems as well. Here, I give a short formal introduction to the general classification problem and related evaluation metrics from the perspective of probabilistic forecasting. In what follows, the symbol $\cL$ refers to the law or distribution of a given random variable.

\subsection{Traditional multi-class classification}
\label{Sec:Class}

In the classical (multi-class) classification problem, one tries to predict the distinct class $Y$ of an instance characterized by a vector of features $\bX$.  Formally, the outcome $Y$ is a random variable on a probability space $(\Omega,\cA,\myQ)$ taking values in the set of classes $\cY$ of cardinality $m \in \N$, and the feature vector $\bX$ is a random vector taking values in some feature space $\cX \subseteq \R^d$. Ideally, one learns the entire conditional distribution $p(\bX) = \cL(Y \mid \bX)$ of $Y$ given $\bX$ through a probabilistic classifier $c\colon\cX \rightarrow \cPY$ mapping the features of a given instance to a probability distribution from the set of probability distributions $\cPY$ on $\cY$. The set $\cPY$ of probability distributions is typically identified with the probability simplex
\[
\Delta_{m-1} = \{p \in [0,1]^m \mid p_1 + \dots + p_m = 1\}
\]
by (arbitrarily) labeling the classes as $1,\dots,m$, and probability distributions are represented by vectors $p \in \Delta_{m-1}$, where the $i$-th entry $p_i$ is the probability assigned to class $i$ for $i = 1,\dots,m$. To ease notation in what follows, vectors in $\Delta_{m-1}$ are indexed directly by the classes in $\cY$ without explicit mention of any (re-)labeling.

Proper scoring rules quantify the value of a probabilistic classification and facilitate comparison of multiple probabilistic classifiers \citep{GR07}. 
A scoring rule is a mapping $\myS \colon \cPY \times \cY \rightarrow \overline{\R}$, which assigns a, possibly infinite, score $\myS(p,y)$ from the extended real numbers $\overline\R = \R \cup \{\pm \infty\}$ to a predictive distribution $p$ if the true class is $y$. Typically, scores are negatively oriented in that lower scores are preferred.
A scoring rule $\myS$ is called \emph{proper} if the true distribution $p = \cL(Y)$ of $Y$ minimizes the expected score,
\begin{equation} \label{Eq:Proper2}
\E[\myS(p,Y)] \leq \E[\myS(q,Y)]\quad\text{for } Y\sim p \text{ and all } p,q \in \cPY.
\end{equation}
It is \emph{strictly proper} if the inequality \eqref{Eq:Proper2} is strict unless $p = q$. Prominent examples are the logarithmic score
\begin{equation}
\myS_{\log}(p,y) = -\log p_y
\label{Eq:LogS}
\end{equation}
and the Brier score
\begin{equation}
\myS_{\mathrm{B}}(p,y) = (1 - p_y)^2 + \sum_{z \neq y} p_z^2 = 1 - 2p_y + \sum_{z \in \cY} p_z^2.
\label{Eq:Brier}
\end{equation}

Frequently, current practice does not focus on learning the full conditional distribution, but rather on simply predicting the most likely class, i.e., the mode of the conditional distribution $p(\bX)$. This is formalized by a \emph{hard} classifier $c \colon \cX \rightarrow \cY$ aspiring to satisfy the functional relationship $c(\bX) \in \operatorname{Mode}(p(\bX))$, where the \emph{mode functional} is given by
\begin{equation}
\label{Eq:Mode}
\operatorname{Mode}(p) = \argmax_{y \in \cY} p_y = \{z \in \cY \mid p_z = \max_{y \in \cY} p_y\}
\end{equation}

for $p \in \Deltam$. Other functionals may be learned as well. When it comes to point forecasts of real-valued outcomes popular choices are the mean or a quantile, see for example \cite{GR21}.
Formally, a statistical functional $\myT\colon \cPY \rightarrow 2^{\cT}$ reduces probability measures to certain facets in some space $\cT$. Note that the functional $\myT$ maps a distribution to a subset in the power set $2^{\cT}$ of $\cT$ owing to the fact that the functional value may not be uniquely determined. For example, the mode \eqref{Eq:Mode} of a distribution is not unique if multiple classes are assigned the maximum probability.  The probabilistic top lists introduced in Section \ref{Sec:TopLists} are a nonstandard example of a statistical functional, which lies at the heart of this work.

Similar to the evaluation of probabilistic classifiers through the use of proper scoring rules, predictions aimed at a statistical functional are evaluated by means of consistent scoring functions. Given a functional $\myT$, a scoring function is a mapping $\myS\colon \cT \times \cY \rightarrow \overline{\R}$, which assigns a score $\myS(t,y)$ to a predicted facet $t$ if the true class is $y$. A scoring function $\myS$ is \emph{consistent} for the functional $\myT$ if the expected score is minimized by any prediction that is related to the true distribution of $Y$ by the functional, i.e., 
\begin{equation} \label{Eq:Consistent2}
\E[\myS(t,Y)] \leq \E[\myS(s,Y)]\quad\text{for } Y\sim p, t \in \myT(p) \text{ and all } p \in \cPY, s \in \cT.
\end{equation}
It is \emph{strictly consistent} for $\myT$ if the inequality \eqref{Eq:Consistent2} is strict unless $s \in \myT(p)$. A functional $\myT$ is called \emph{elicitable} if a strictly consistent scoring function for $\myT$ exists. For example, the mode \eqref{Eq:Mode} is elicited by the zero-one scoring function or misclassification loss \citep{G17}
\[
\myS(x,y) = \one\{x \neq y\},
\] 
which is simply a negatively oriented version of the ubiquitous classification accuracy. As discussed by \cite{G17} and references therein, decisions based on the mode are suboptimal if the losses invoked by different misclassifications are not uniform, which is frequently the case. 

(Strictly) Proper scoring rules arise as a special case of (strictly) consistent scoring functions if $\myT$ is the identity on $\cPY$. Furthermore, any consistent scoring function yields a proper scoring rule if predictive distributions are reduced by means of the respective functional first \citep[Theorem 3]{Gneiting2011}. On the other hand, a point prediction $x \in \cY$ can be assessed by means of a scoring rule, as the classes can be embedded in the probability simplex by identifying a class $y \in \cY$ with the point mass $\delta_y \in \cPY$ in $y$. For example, applying the Brier score to a class prediction in this way yields twice the misclassification loss, $\myS_{\mathrm{B}}(x,y) = \myS_{\mathrm{B}}(\delta_x,y) = 2\cdot \one\{x\neq y\}$.

Naturally, the true conditional distributions are unknown in practice and expected scores are estimated by the mean score attained across all instances available for evaluation purposes.

\subsection{Multi-label classification}

In multi-label classification problems, an instance may be assigned multiple (class) labels. Here, I frame this as a special case of multi-class classification instead of an entirely different problem.

Let $L$ be the set of labels and $\cY \subseteq 2^L$ be the set of label sets, i.e., classes are subsets of labels. In this setting, it may be difficult to specify a sensible predictive distribution on $\cY$ even for moderately sized sets of labels $L$, since the number of classes may grow exponentially in the number of labels. Extant comparative evaluation practices in multi-label classification focus mainly on hard classifiers ignoring the need for uncertainty quantification through probabilistic assessments \citep[e.g.,][]{TK07,ZZ14,LPAWQ20,TGM21} with the exception of \citet{RPHT11}, who also consider a sum of binary logarithmic losses to evaluate the confidence scores associated with individual labels.

Classification accuracy is typically referred to as (sub-)set accuracy in multi-label classification. Other popular evaluation metrics typically quantify the overlap between the predicted label set and the true label set. For example, the comparative evaluation by \cite{LPAWQ20} reports instance F1 scores in addition to set accuracy, where instance F1 of a single instance is defined as
\[
\myS_{\mathrm{F1}}(x,y) = \frac{2 \sum_{\ell \in L} \one\{\ell \in x\}\one\{\ell \in y\}}{\sum_{\ell \in L} \one\{\ell \in x\} + \sum_{\ell \in L} \one\{\ell \in y\}}.
\]
(and the overall score is simply the average across all instances as usual). Note that this is a positively oriented measure, i.e., higher instance F1 scores are preferred.
Caution is advised, as the instance F1 score is not consistent for the mode as illustrated by the following example. Hence, evaluating the same predictions using set accuracy and instance F1 seems to be a questionable practice.

\begin{example}
	Let the label set $L = \{1,2,3,4,5\}$ consist of five labels and the set of classes $\cY = 2^L$ be the power set of the label set $L$. Consider the distribution $p \in \cPY$ that assigns all probability mass to four label sets as follows:
	\[
	p_{\{1,2\}} = 0.28, \quad p_{\{1,3\}} = 0.24, \quad p_{\{1,4\}} = 0.24, \quad p_{\{1,5\}} = 0.24.
	\]
	Then the expected instance F1 score of the most likely label set $\{1,2\}$, 
	\[
	\myE[\myS_{\mathrm{F1}}(\{1,2\},Y)] = 0.64,
	\]
	given $Y \sim p$
	is surpassed by predicting only the single label $\{1\}$,
	\[
	\myE[\myS_{\mathrm{F1}}(\{1\},Y)] = \tfrac23.
	\] 
\end{example}

\section{Probabilistic Top Lists}
\label{Sec:TopLists}

In what follows, I develop a theory informing principled evaluation of top list predictions based on proper scoring rules. To this end, a concise mathematical definition of probabilistic top lists is fundamental.

Let $k \in \{0,\dots,m\}$ be fixed. A \emph{(probabilistic) top-$k$ list} is a collection $t = (\hat{Y}, \hat{t})$ of a set $\hat{Y} \subset \cY$ of $k = \vert\hat{Y}\vert$ classes together with a vector $\hat{t} = (\hat{t}_y)_{y \in \hat{Y}} \in [0,1]^k$ of \emph{confidence scores} (or predicted probabilities) indexed by the set $\hat{Y}$ whose sum does not exceed one, i.e., $\sum_{y \in \hat{Y}} \hat{t}_y \leq 1$, and equals one if $k = m$. Let $\cT_k$ denote the set of probabilistic top-$k$ lists. On the one hand, the above definition includes the empty top-0 list $t_\emptyset = (\emptyset,())$ for technical reasons. At the other extreme, top-$m$ lists specify entire probability distributions on $\cY$, i.e., $\cT_m \equiv \cPY$. The \emph{proxy probability}
\[
\pi(t) := \frac{1 - \sum_{y \in \hat{Y}} \hat{t}_y}{m - k}
\]
associated with a top-$k$ list $t = (\hat{Y},\hat{t}) \in \cT_k$ of size $k < m$ is the probability mass not accounted for by the top list $t$ divided by the number of classes not listed. For a top-$m$ list $t \in \cT_m$, the proxy probability $\pi(t) \equiv 0$ is defined to be zero.
The \emph{padded probability distribution} $\tilde{t} = (\tilde{t}_y)_{y \in \cY} \in \Deltam$ associated with a probabilistic top-$k$ list $t = (\hat{Y},\hat{t}) \in \cT_k$ assigns the proxy probability $\pi(t)$ to all classes not in $\hat{Y}$, i.e.,
\begin{equation} \label{Eq:PaddedDist}
\tilde{t}_y = \begin{cases}
\hat{t}_y, & \text{if } y \in \hat{Y}, \\
\pi(t), & \text{if } y \notin \hat{Y}
\end{cases}
\end{equation}
for $y \in \cY$.

A top-$k$ list $t = (\hat{Y},\hat{t})$ is \emph{calibrated} relative to a distribution $p = (p_y)_{y \in \cY} \in \Deltam$ if the confidence score $\hat{t}_y$ of class $y$ matches the true class probability $p_y$ for all $y \in \hat{Y}$. A top-$k$ list $t = (\hat{Y},\hat{t})$ is \emph{true} relative to a distribution $p \in \cPY$ if it is calibrated relative to $p$ and $\hat{Y}$ consists of $k$ most likely classes. There may be multiple true top-$k$ lists for a given $k \in \N$ if the class probabilities are not distinct (i.e., some classes have the same probability). References to the true distribution of the outcome $Y$ are usually omitted in what follows. For example, a calibrated top list is understood to be calibrated relative to the distribution $\cL(Y)$ of $Y$.
The \emph{(probabilistic) top-$k$ list functional} $\myT_k\colon \cPY \rightarrow \cT_k$ maps any probability distribution $p \in \cPY$ to the set
\[
\myT_k(p) = \left\{(\hat{Y},(p_y)_{y \in \hat{Y}}) \in \cT_k \;\middle|\; \hat{Y} \in \argmax_{S \subset \cY : \vert S\vert = k}\sum_{y \in S} p_y
\right\}
\]
of top-$k$ lists that are true relative to $p$.
The top-$m$ list functional $\myT_m$ identifies $\cPY$ with $\cT_m$. A top-$k$ list $t \in \cT_k$ is \emph{valid} if it is true relative to some probability distribution, i.e., there exists a distribution $p \in \cPY$ such that $t \in \myT_k(p)$. Equivalently, a top-$k$ list $t = (\hat{Y},\hat{t})$ is valid if the associated proxy probability does not exceed the least confidence score, i.e., $\min_{y \in\hat{Y}} \hat{t}_y \geq \pi(t)$. Let $\tilde{\cT}_k \subset \cT_k$ denote the set of valid top-$k$ lists.
The following is a simple example illustrating the previous definitions.

\begin{Ex}\sloppy
	Let $k = 2$, $m = 4$, $\cY = \{1,2,3,4\}$ and $Y \sim p = (0.5,0.2,0.2,0.1)$, i.e., $\P(Y = y) = p_y$. 	There are two true top-$2$ lists, namely, $T_2(p) = \{(\{1,2\},(0.5,0.2)),(\{1,3\},(0.5,0.2))\}$. The list $s = (\{1,4\},(0.5,0.1))$ is calibrated (relative to $p$), but fails to be valid, because it cannot be true relative to a probability distribution on $\cY$. On the other hand, the list $r = (\{1,4\},(0.5,0.2))$ is valid, as it is true relative to $q = (0.5,0.2,0.1,0.2)$, but fails to be calibrated.
	\label{Ex:TopLists}
\end{Ex}

An invalid top-$k$ list $t = (\hat{Y},\hat{t})$ contains a \emph{largest valid sublist} $t' = (\hat{Y}',(\hat{t}_y)_{y \in \hat{Y}'})$.
The largest valid sublist is uniquely determined by recursively removing the class $z \in \argmin_{y \in \hat{Y}} \hat{t}_y$ with the lowest confidence score from the invalid list until a valid list remains. Removing a class $x \in \hat{Y}$ with $\pi(t) > \hat{t}_x$ cannot result in a valid top list $t' = (\hat{Y}\setminus \{x\},(\hat{t}_y)_{y \in \hat{Y}\setminus \{x\}})$ as long as there is another class $z$ such that $\hat{t}_x \geq \hat{t}_z$, because $\pi(t) > \pi(t') > \hat{t}_x \geq \hat{t}_z$. Similarly, removing a class $x \in \hat{Y}$ with $\pi(t) \leq \hat{t}_x$ cannot prevent the removal of a class $z$ if $\pi(t) > \hat{t}_z$, because it does not decrease the proxy probability, $\pi(t') \geq p(t)$. Hence, \emph{no} sublist containing a class with minimal confidence score in the original list is valid and removal results in a superlist of the largest valid sublist. 

In what follows, I show how to construct consistent scoring functions for the top-$k$ list functional using proper scoring rules. 
Recall from Section \ref{Sec:Class} that a scoring function $\myS\colon \cT_k \times \cY \rightarrow \overline{\R}$ is \emph{consistent} for the top list functional $\myT_k$ if the expected score under any probability distribution $p \in \cPY$ is minimized by any true top-$k$ lists $t \in \myT_k(p)$, i.e.,
\[
\myE[\myS(t,Y)] \leq \myE[\myS(s,Y)]
\]
holds for $Y \sim p$ and any $s \in \cT_k$. 
It is \emph{strictly consistent} if the expected score is minimized only by the true top-$k$ lists $t \in \myT_k(p)$, i.e., the inequality is strict for $s \notin \myT_k(p)$. The functional $\myT_k$ is \emph{elicitable} if a strictly consistent scoring function for $\myT_k$ exists. In what follows, such a scoring function is constructed, giving rise to the following theorem.

\begin{Thm}
	The top-$k$ list functional $\myT_k$ is elicitable.
\end{Thm}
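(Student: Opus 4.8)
The plan is to exhibit a concrete strictly consistent scoring function, namely a \emph{padded symmetric score}: fix a symmetric strictly proper scoring rule $\myS$ (the Brier score $\myS_{\mathrm{B}}$ being the canonical choice) and score a top-$k$ list $t=(\hat Y,\hat t)$ by applying $\myS$ to its padded distribution, i.e.\ $\myS(\tilde t,y)$. Writing $\overline{\myS}(q\mid p)=\myE_{Y\sim p}[\myS(q,Y)]$ for the expected score, the goal is to show that $\overline{\myS}(\tilde t\mid p)$ is minimized over $t\in\cT_k$ exactly by the true top-$k$ lists $t\in\myT_k(p)$. I would organize the argument as two nested optimizations---first over the confidence scores for a fixed listed set $\hat Y$, then over the choice of $\hat Y$---followed by a separate treatment of invalid lists.

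First I would reduce to calibration using symmetry and propriety. Because $\tilde t$ is constant on $\cY\setminus\hat Y$ and $\myS$ is symmetric, $\myS(\tilde t,\cdot)$ is constant across all unlisted classes, so $\overline{\myS}(\tilde t\mid p)$ depends on $p$ only through the entries $(p_y)_{y\in\hat Y}$ and the total mass $\sum_{y\notin\hat Y}p_y$. Consequently $\overline{\myS}(\tilde t\mid p)=\overline{\myS}(\tilde t\mid\bar p)$, where $\bar p$ agrees with $p$ on $\hat Y$ and replaces each unlisted entry by the average of $p$ over $\cY\setminus\hat Y$. Strict propriety of $\myS$ then forces the unique minimizer (for fixed $\hat Y$) to be $\tilde t=\bar p$, which is precisely the padded distribution of the \emph{calibrated} list on $\hat Y$; the resulting minimal value is the generalized entropy $G(\tilde t):=\overline{\myS}(\tilde t\mid\tilde t)$.

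Next I would single out the correct set via majorization. Among calibrated lists the expected score equals $G(p^S)$, where $p^S$ denotes the padded distribution of the calibrated list on $S$. Since $\myS$ is symmetric and strictly proper, $G$ is symmetric and strictly concave (strict propriety renders the entropy strictly concave), hence strictly Schur-concave. The key lemma is that, for \emph{valid} sets $S$, the padded distribution $p^{S^*}$ of a top-$k$ set $S^*$ majorizes $p^S$: a direct comparison of ordered partial sums, using validity to guarantee that each listed entry dominates the common padding value, shows $p^{S^*}\succ p^S$, strictly unless $S$ is itself a top-$k$ set. Strict Schur-concavity then yields $G(p^{S^*})<G(p^S)$, pinning the minimizers down to the true top-$k$ lists.

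Finally I would deal with invalid lists, which is where the main difficulty lies. Under naive padding an invalid list can spuriously encode the full distribution $p$ by listing a low-probability class (so that the inflated proxy probability matches the remaining true masses), thereby undercutting the true top list; this is exactly what the validity condition $\min_{y\in\hat Y}\hat t_y\geq\pi(t)$ excludes. I would therefore score an invalid list through its largest valid sublist, so that it is dominated by a valid list of strictly smaller length, whose optimal score is in turn no smaller than the length-$k$ optimum by monotonicity of $G(p^{S^*})$ in the list length (again a majorization between the top-$k$ and top-$(k-1)$ padded distributions). Combining the three steps shows that $\overline{\myS}(\tilde t\mid p)$ is uniquely minimized over $\cT_k$ by the true top-$k$ lists, so the padded symmetric score is strictly consistent and $\myT_k$ is elicitable. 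The delicate points---which I expect to be the crux---are the valid-set majorization and the domination of invalid lists, since the naive padded score is \emph{not} consistent without the largest-valid-sublist correction.
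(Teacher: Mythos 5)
Your construction coincides with the paper's (padded symmetric scores, with invalid lists routed to their largest valid sublist), and your first two steps are sound --- indeed your derivation of $\E[\myS(t,Y)]=G(\tilde t)$ for calibrated valid lists, by averaging the unlisted masses and invoking strict propriety, is arguably cleaner than the paper's Savage-representation computation (Lemma \ref{Lem:ExpEntropy}). But the nested-optimization scheme has a hole in exactly the case that carries most of the difficulty: a \emph{valid but uncalibrated} list $s$ on a set $\hat Z$ whose \emph{calibrated} counterpart is \emph{invalid}. Your step 1 bounds $s$ from below by the calibrated list on $\hat Z$ under the naive padded score, and your step 2 compares only calibrated lists on valid sets to the true top-$k$ list; when the calibrated list on $\hat Z$ is invalid, its naive padded score can drop \emph{below} that of the true top-$k$ list (its padded distribution can equal $p$ itself), so your chain of inequalities has no usable middle term. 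Concretely, for $p=(0.5,0.2,0.2,0.1)$ and $k=2$, the valid uncalibrated list $(\{1,4\},(0.5,0.2))$ has expected padded Brier score $0.675$ and the true top-$2$ list scores $0.665$, but the calibrated list on $\{1,4\}$, namely $(\{1,4\},(0.5,0.1))$, pads to $p$ and scores $G(p)=0.66<0.665$. The paper closes this case by comparing $s$ not to the calibrated list on $\hat Z$ but to the \emph{largest valid sublist of that calibrated list}, via a supergradient computation resting on the Schur--Ostrowski-type condition for concave functions (Lemma \ref{Lem:SchurOstrowski}) together with Lemma \ref{Lem:SymSuperGrad}, and then invokes the comparability property. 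You would need either that argument or a genuinely constrained optimization over \emph{valid} confidence-score vectors for fixed $\hat Z$, whose minimizer is no longer the calibrated list.

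A second, smaller gap: scoring an invalid list by its largest valid sublist alone yields consistency but not \emph{strict} consistency on all of $\cT_k$, so it does not by itself establish elicitability on that domain. With $p=(0.4,0.2,0.2,0.2)$, the invalid list $(\{1,2\},(0.4,0.1))$ has largest valid sublist $(\{1\},(0.4))$, which pads to $p$ and therefore ties with the true top-$2$ list. The paper repairs this by adding a positive penalty $c_{\mathrm{invalid}}>0$ to invalid predictions (Theorem \ref{Thm:PenSym}); alternatively, for elicitability it suffices to restrict the domain to valid top-$k$ lists, which is the route of Theorem \ref{Thm:PaddedSym} --- but either way the first gap still has to be closed.
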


\begin{proof}
	This is an immediate consequence of either Theorem \ref{Thm:PaddedSym} or \ref{Thm:PenSym}.
\end{proof}

As the image of $\myT_k$ is $\tilde{\cT}_k$ by definition, invalid top-$k$ lists may be ruled out a priori and the domain of $\myS$ may be restricted to $\tilde{\cT}_k \times \cY$ in the above definitions. This is essentially a matter of taste and the question is whether predictions must be valid or whether this should merely be encouraged by the use of a consistent scoring function. Any scoring function that is consistent for valid top list predictions can be extended by assigning an infinite score to any invalid top list regardless of the observation. In a sense, this reconciles both points of view, as an invalid prediction could not outperform any arbitrary valid prediction, thereby disqualifying it in comparison. In what follows, I focus on the construction of consistent scoring functions for valid top lists at first and propose a way of extending such scoring functions to invalid top lists that is less daunting than simply assigning an infinite score.

\section{Mathematical Preliminaries}
\label{Sec:MathPrelim}

This section introduces some preliminary results, which are used heavily in the next section.

\subsection{Symmetric scoring rules}
\label{Sec:SymScoringRules}

The proposed scoring functions are based on symmetric proper scoring rules.
Recall from \cite{GR07} 
that (subject to mild regularity conditions) any proper scoring rule $\myS \colon \cPY \rightarrow \overline{\R}$ admits a \emph{Savage representation},
\begin{equation} \label{Eq:Savage}
\myS(p,y) = G(p) - \langle G'(p), p\rangle + G'_y(p),
\end{equation}
in terms of a concave function $G\colon\Delta_{m-1} \rightarrow \R$ and a supergradient $G'\colon\Delta_{m-1} \rightarrow \R^m$ of $G$, i.e., a function satisfying the \emph{supergradient inequality}
\begin{equation} \label{Eq:Supergrad}
G(q) \leq G(p) + \langle G'(p), q - p\rangle
\end{equation}
for all $p,q \in \Delta_{m-1}$. Conversely, any function of the form \eqref{Eq:Savage} is a proper scoring rule. The function $G$ is strictly concave if, and only if, $\myS$ is strictly proper. It is called the \emph{entropy (function)} of $\myS$, and it is simply the expected score $G(p) = \myE[\myS(p,Y)]$ under the posited distribution, $Y \sim p$. The supergradient inequality \eqref{Eq:Supergrad} is strict if $G$ is strictly concave and $p \neq q$ \citep[][Satz 5.1.12]{Jun15}.

Let $\Sym(\cY)$ denote the symmetric group on $\cY$, i.e., the set of all permutations of $\cY$.
A scoring rule is called \emph{symmetric} if scores are invariant under permutation of classes, i.e.,
\[
\myS((p_y),y) = \myS((p_{\tau^{-1}(y)}),\tau(y))
\]
holds for any permutation $\tau \in \Sym(\cY)$ and all $y \in \cY, p \in \cPY$. Clearly, the entropy function $G$ of a symmetric scoring rule is also symmetric, i.e., invariant to permutation in the sense that
$G(p) = G((p_{\tau(y)}))$
holds for any permutation $\tau \in \Sym(\cY)$ and any distribution $p \in \cPY$.
Vice versa, any symmetric entropy function admits a symmetric proper scoring rule.

\begin{Prop}
	Let $G\colon \cPY \rightarrow\cPY$ be a concave symmetric function. Then there exists a supergradient $G'$ such that the Savage representation \eqref{Eq:Savage} yields a symmetric proper scoring rule.
	\label{Prop:SymSupergrad}
\end{Prop}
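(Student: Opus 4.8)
The plan is to construct the required supergradient by \emph{symmetrizing} an arbitrary one over the group action of $\Sym(\cY)$. For a permutation $\sigma \in \Sym(\cY)$ write $\sigma \cdot v$ for the vector in $\R^m$ with entries $(\sigma\cdot v)_y = v_{\sigma^{-1}(y)}$, so that the permuted distribution appearing in the definition of symmetry is $\sigma\cdot p = (p_{\sigma^{-1}(y)})_{y\in\cY}$. Writing out the Savage representation \eqref{Eq:Savage} at $\sigma\cdot p$ with observation $\sigma(y)$ and comparing with $\myS(p,y)$, the symmetry of $G$ cancels the entropy terms, so the whole statement reduces to finding a supergradient that is \emph{equivariant}, i.e.\ $G'(\sigma\cdot p) = \sigma\cdot G'(p)$ for all $\sigma$ and $p$. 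Indeed, equivariance makes the inner-product term invariant, $\langle G'(\sigma\cdot p),\sigma\cdot p\rangle = \langle G'(p),p\rangle$ (since the permutation action is orthogonal), and it sends the $\sigma(y)$-th entry of $G'(\sigma\cdot p)$ to the $y$-th entry of $G'(p)$, which is exactly what symmetry of $\myS$ demands. The point of the proposition is that a \emph{generic} supergradient need not be equivariant --- at highly symmetric points such as the uniform distribution the supergradient is typically far from unique --- so equivariance must be engineered.

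Second, I would record the elementary fact that the group action maps supergradients to supergradients: if $v$ is a supergradient of $G$ at $q$, then $\sigma\cdot v$ is a supergradient of $G$ at $\sigma\cdot q$. This follows by applying the supergradient inequality \eqref{Eq:Supergrad} at $q$ to the point $\sigma^{-1}\cdot r$, then using $G(\sigma^{-1}\cdot r)=G(r)$, $G(q)=G(\sigma\cdot q)$ and orthogonality of the action. Starting from any supergradient selection $p\mapsto G'_0(p)$ (which exists by concavity of $G$), I would then define
\[
G'(p) = \frac{1}{|\Sym(\cY)|}\sum_{\sigma\in\Sym(\cY)} \sigma\cdot G'_0(\sigma^{-1}\cdot p).
\]
By the previous fact each summand is a supergradient of $G$ \emph{at $p$}, and since the supergradient inequality is affine in $G'$ their average is again a supergradient of $G$ at $p$. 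A short reindexing of the sum (substituting $\sigma = \tau\rho$) shows $G'(\tau\cdot p) = \tau\cdot G'(p)$, so $G'$ is equivariant.

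Finally, I would substitute this $G'$ into \eqref{Eq:Savage} and verify directly, along the lines indicated in the first paragraph, that the resulting proper scoring rule satisfies $\myS(\sigma\cdot p,\sigma(y)) = \myS(p,y)$, establishing symmetry; properness is automatic because $G'$ is a genuine supergradient. I expect the only delicate point to be the \emph{existence} of a supergradient at the relative boundary of $\Deltam$: concavity guarantees supergradients on the relative interior, but at boundary points (where some $p_y = 0$) a finite supergradient may fail to exist, as the logarithmic score illustrates. This is exactly the ``mild regularity'' already invoked for the Savage representation; one handles it by allowing $G'$ to take values in $\overline{\R}$ (equivalently, permitting infinite scores), and the symmetrization above is unaffected since the averaging and reindexing respect the extended-real structure.
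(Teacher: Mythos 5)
Your proposal is correct and follows essentially the same route as the paper: symmetrize an arbitrary supergradient by averaging $\sigma\cdot G_0'(\sigma^{-1}\cdot p)$ over $\Sym(\cY)$, note that each summand is a supergradient at $p$ by symmetry of $G$ and orthogonality of the permutation action, and deduce symmetry of the scoring rule from the resulting equivariance $G'(\sigma\cdot p)=\sigma\cdot G'(p)$ together with invariance of $\langle G'(p),p\rangle$. Your additional remark about possibly infinite supergradients on the relative boundary of $\Deltam$ is a sensible caveat the paper subsumes under the ``mild regularity conditions'' of the Savage representation.
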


\begin{proof}
	Let $\bar{G}'$ be a supergradient of $G$. Using the shorthand $v_\tau = (v_{\tau^{-1}(y)})_{y \in \cY}$ for vectors $v = (v_y)_{y\in\cY} \in \R^m$ indexed by $\cY$ and permutations $\tau \in \Sym(\cY)$, define $G'$ by
	\[
	G'(p) = \frac{1}{\vert \Sym(\cY)\vert} \sum_{\tau \in \Sym(\cY)} \bar{G}'_{\tau^{-1}}(p_\tau)
	\]
	for $p \in \cPY$.
	By symmetry of $G$ and the supergradient inequality,
	\[
	G(q) = G(q_\tau) \leq G(p_\tau) + \langle \bar{G}'(p_\tau),q_\tau - p_\tau\rangle = G(p) + \langle \bar{G}'_{\tau^{-1}}(p_\tau),q - p\rangle
	\]
	holds for all $p,q \in \cPY$ and $\tau \in \Sym(\cY)$. Summation over all $\tau \in \Sym(\cY)$ and division by the cardinality of the symmetric group $\Sym(\cY)$  yields 
	\[
	G(q) \leq \frac{1}{\vert \Sym(\cY)\vert} \sum_{\tau \in \Sym(\cY)} (G(p) + \langle \bar{G}'_{\tau^{-1}}(p_\tau),q - p\rangle) = G(p) + \langle G'(p),q - p\rangle
	\]
	for any $p,q \in \cPY$. Therefore, $G'$ is a supergradient and the Savage representation \eqref{Eq:Savage} yields a symmetric scoring rule, since
	\begin{align*}
	G'(p) &= \frac{1}{\vert \Sym(\cY)\vert} \sum_{\tau \in \Sym(\cY)} \bar{G}'_{\tau^{-1}}(p_\tau) = \frac{1}{\vert \Sym(\cY)\vert} \sum_{\tau \in \Sym(\cY)} \bar{G}'_{(\tau \circ \rho)^{-1}}(p_{\tau \circ \rho}) \\
	&= \frac{1}{\vert \Sym(\cY)\vert} \sum_{\tau \in \Sym(\cY)} \bar{G}'_{\rho^{-1} \circ \tau^{-1}}(p_{\tau \circ \rho}) = \frac{1}{\vert \Sym(\cY)\vert} \sum_{\tau \in \Sym(\cY)} (\bar{G}'_{\tau^{-1}}(p_{\tau \circ \rho}))_{\rho^{-1}} \\
	&= \left(\frac{1}{\vert \Sym(\cY)\vert} \sum_{\tau \in \Sym(\cY)} \bar{G}'_{\tau^{-1}}((p_{\rho})_\tau)\right)_{\rho^{-1}} = G'_{\rho^{-1}}(p_\rho)
	\end{align*}
	and
	\[
	\langle G'(p),p\rangle = \langle G'_{\rho^{-1}}(p_\rho), p \rangle = \langle G'(p_\rho),p_\rho \rangle
	\]
	holds for any permutation $\rho \in \Sym(\cY)$ and all $p \in \cPY$.
\end{proof}

On the other hand, not all proper scoring rules with symmetric entropy function are symmetric. 
The following result provides a necessary condition satisfied by supergradients of symmetric proper scoring rules.

\begin{Lem}
	Let\/ $\myS$ be a symmetric proper scoring rule. If $p \in \Deltam$ satisfies $p_y = p_z$ for $y,z \in \cY$, then the supergradient $G'(p)$ at $p$ in the Savage representation \eqref{Eq:Savage} satisfies $G'_y(p) = G'_z(p)$.
	\label{Lem:SymSuperGrad}
\end{Lem}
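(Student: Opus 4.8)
The plan is to exploit the symmetry of $\myS$ through the single transposition $\tau \in \Sym(\cY)$ that swaps $y$ and $z$ and fixes every other class. The crucial observation is that, because $p_y = p_z$, this transposition leaves $p$ invariant. Writing $p_\tau = (p_{\tau^{-1}(w)})_{w \in \cY}$ and using $\tau = \tau^{-1}$, one checks entrywise that $(p_\tau)_y = p_{\tau(y)} = p_z = p_y$, that $(p_\tau)_z = p_{\tau(z)} = p_y = p_z$, and that $(p_\tau)_w = p_w$ for every $w \neq y,z$, so indeed $p_\tau = p$.

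With this in hand, I would apply the defining symmetry relation $\myS(p,c) = \myS(p_\tau, \tau(c))$ at the observation $c = y$. Since $p_\tau = p$ and $\tau(y) = z$, the right-hand side collapses to $\myS(p,z)$, giving $\myS(p,y) = \myS(p,z)$. Substituting the Savage representation \eqref{Eq:Savage} for both sides, the common terms $G(p) - \langle G'(p), p\rangle$ cancel, and only the supergradient components remain, yielding $G'_y(p) = G'_z(p)$, as claimed.

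The argument carries no real technical obstacle: the only idea needed is the right choice of permutation, and the equality $p_y = p_z$ is precisely what makes the permuted forecast coincide with the original, so that the entropy and inner-product terms drop out. The single point worth making explicit is that the conclusion concerns the particular supergradient $G'$ appearing in the given representation of $\myS$ — supergradients of a merely concave $G$ need not be unique — but this causes no difficulty, since $G'_y(p)$ and $G'_z(p)$ enter the representation as fixed numbers and the computation above pins down their equality directly.
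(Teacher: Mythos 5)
Your proof is correct and follows essentially the same route as the paper: both use the transposition $\tau = (y\;z)$, observe that $p_\tau = p$ since $p_y = p_z$, and then read off $G'_y(p) = G'_z(p)$ from the symmetry identity $\myS(p,y) = \myS(p_\tau,\tau(y))$ combined with the Savage representation. Your additional remark about non-uniqueness of the supergradient is a fair clarification but does not change the argument.
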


\begin{proof}
	Let $\tau = (y~z)$ be the permutation swapping $y$ and $z$ while keeping all other classes fixed.
	Using notation as in the proof of Proposition \ref{Prop:SymSupergrad}, the equality $\myS(p,y) = \myS(p_\tau,\tau(y))$ holds by symmetry of $\myS$. Since $p = p_\tau$, the Savage representation \eqref{Eq:Savage} yields $G'_y(p) = G'_{\tau(y)}(p) = G'_z(p)$.
\end{proof}

The Brier score \eqref{Eq:Brier} and the logarithmic score \eqref{Eq:LogS} are both symmetric scoring rules. The entropy function of the Brier score is given by
\begin{equation}\label{Eq:BrierEntropy}
G(p) = 1 - \sum_{y \in \cY} p_y^2,
\end{equation}
whereas the entropy of the logarithmic score is given by
\[
G(p) = -\sum_{y \in \cY} p_y \log(p_y)
\]
\citep[see][]{GR07}.
\subsection{Majorization and Schur-concavity}

In this section, I adopt some definitions and results on majorization and Schur-concavity from \cite{MOA11}. The theory of majorization is essentially a theory of inequalities, which covers many classical results and a plethora of mathematical applications not only in stochastics.

For a vector $v \in \R^m$, the vector $v_{[\,]} := (v_{[i]})_{i = 1}^m$, where
\[
v_{[1]} \geq \dots \geq v_{[m]}
\]
denote the components of $v$ in decreasing order, is called the \emph{decreasing rearrangement} of $v$. A vector $w \in \R^m$ is a permutation of $v \in \R^m$ (i.e., $w$ is obtained by permuting the entries of $v$) precisely if $v_{[\,]} = w_{[\,]}$.
For vectors $v,w\in\R^m$ with equal sum of components, $\sum_i v_i = \sum_i w_i$, the vector $v$ is said to \emph{majorize} $w$, or $v \succ w$ for short, if the inequality
\[
\sum_{i = 1}^k v_{[i]} \geq \sum_{i = 1}^k w_{[i]}
\]
holds for all $k = 1,\dots,m-1$. 

Let $D \subseteq \R^m$. A function $f \colon D \rightarrow \R$ is \emph{Schur-concave on} $D$ if $v \succ w$ implies $f(v) \leq f(w)$ for all $v,w \in D$. 
A Schur-concave function $f$ is \emph{strictly Schur-concave} if $f(v) < f(w)$ holds whenever $v \succ w$ and $v_{[\,]} \neq w_{[\,]}$.
In particular, any symmetric concave function is Schur-concave 
and strictly Schur-concave if it is strictly concave \citep[][Chapter 3, Proposition C.2 and C.2.c]{MOA11}. Hence, the following lemma holds.

\begin{Lem}
	The entropy function of any symmetric proper scoring rule is Schur-concave. It is strictly Schur-concave if the scoring rule is strictly proper.
	\label{Lem:EntropySchur}
\end{Lem}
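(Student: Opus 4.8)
The plan is to combine two facts already in hand. By the Savage representation \eqref{Eq:Savage}, the entropy function $G$ of any proper scoring rule is concave on $\Deltam$, and it is strictly concave exactly when the scoring rule is strictly proper. For a symmetric scoring rule, I observed above that $G$ is itself symmetric, meaning $G(p) = G(p_\tau)$ for every permutation $\tau \in \Sym(\cY)$. So the entropy function at issue is a symmetric concave function on the simplex, and strictly concave in the strictly proper case.

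With this setup, the lemma is an immediate specialization of the characterization quoted just above the statement: a symmetric concave function is Schur-concave, and a symmetric strictly concave function is strictly Schur-concave. Applying the first implication to $G$ gives Schur-concavity; applying the second, under strict propriety (hence strict concavity of $G$), gives strict Schur-concavity. No further argument is needed beyond invoking these two implications.

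The only thing I would verify explicitly is that the ambient domain causes no trouble. The cited Schur-concavity results are phrased for symmetric convex sets, and $\Deltam$ is convex and invariant under coordinate permutations, so the conclusion transfers directly. Since majorization $v \succ w$ is defined only for vectors sharing a common component sum, and all $p \in \Deltam$ have sum one, any comparison relevant to $G$ stays within the simplex. I therefore expect no genuine obstacle: the work is entirely in assembling the concavity and symmetry of $G$ and reading off the majorization result, with the domain check being the only (minor) point of care.
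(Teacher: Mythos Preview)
Your proposal is correct and mirrors the paper's own argument: the paper simply notes that the entropy function of a symmetric proper scoring rule is symmetric and concave (strictly concave in the strictly proper case) and then invokes \citet[Chapter~3, Proposition~C.2 and~C.2.c]{MOA11} to conclude (strict) Schur-concavity. Your additional remark about the simplex being a symmetric convex set is a harmless domain check and does not depart from the paper's approach.
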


A set $D \subset \R^m$ is called \emph{symmetric} if $v \in D$ implies $w \in D$ for all vectors $w \in \R^m$ such that $v_{[\,]} = w_{[\,]}$.
By the Schur-Ostrowski criterion \citep[][Chapter 3, Theorem A.4 and A.4.a]{MOA11} a continuously differentiable function $f\colon D \rightarrow \R$ on a symmetric convex set $D$ with non-empty interior is Schur-concave if, and only if, $f$ is symmetric and the partial derivatives $f_{(i)}(v) = \frac{\partial}{\partial v_i} f(v)$ increase as the components $v_i$ of $v$ decrease, i.e., $f_{(i)}(v) \leq f_{(j)}(v)$ if (and only if) $v_i \geq v_j$.

Unfortunately, this does not hold for supergradients of concave functions. The following is a slightly weaker condition, which applies to supergradients of symmetric concave functions.

\begin{Lem}[Schur-Ostrowski condition for concave functions]
	Let $f\colon D \rightarrow \R$ be a symmetric concave function on a symmetric convex set $D$, $v \in D$ and $f'(v)$ be a supergradient of $f$ at $v$, i.e., a vector satisfying the supergradient inequality
	\begin{equation}
	f(w) \leq f(v) + \langle f'(v), w - v\rangle
	\label{Eq:GeneralSupergrad}
	\end{equation}
	for all $w \in D$.
	Then $v_i > v_j$ implies $f'_i(v) \leq f'_j(v)$.
	\label{Lem:SchurOstrowski}
\end{Lem}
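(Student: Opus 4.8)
The plan is to test the supergradient inequality \eqref{Eq:GeneralSupergrad} at a single cleverly chosen point, namely the vector obtained from $v$ by swapping its $i$-th and $j$-th entries. Everything then follows from the two symmetry hypotheses (symmetry of the domain $D$ and of the function $f$) together with a direct computation of the resulting inner product. No perturbation argument or limiting procedure is needed; concavity of $f$ enters only implicitly, through the mere existence and defining property of the supergradient $f'(v)$.

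Concretely, I would let $\tau$ denote the transposition of the coordinates $i$ and $j$ and write $v_\tau$ for the vector with the $i$-th and $j$-th entries of $v$ interchanged (all other entries unchanged), consistent with the notation introduced in the proof of Proposition~\ref{Prop:SymSupergrad}. First, since $D$ is symmetric and $v_\tau$ is a permutation of $v$, we have $v_\tau \in D$, so the supergradient inequality may legitimately be applied with $w = v_\tau$. Second, since $f$ is symmetric, $f(v_\tau) = f(v)$. Plugging $w = v_\tau$ into \eqref{Eq:GeneralSupergrad} and using $f(v_\tau) = f(v)$ yields $0 \leq \langle f'(v),\, v_\tau - v\rangle$. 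The difference $v_\tau - v$ has $i$-th entry $v_j - v_i$, $j$-th entry $v_i - v_j$, and all remaining entries zero, so $\langle f'(v),\, v_\tau - v\rangle = (v_i - v_j)\bigl(f'_j(v) - f'_i(v)\bigr)$. Hence $0 \leq (v_i - v_j)\bigl(f'_j(v) - f'_i(v)\bigr)$, and dividing by $v_i - v_j > 0$ (using the strict hypothesis $v_i > v_j$) gives $f'_i(v) \leq f'_j(v)$, as claimed.

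There is no serious obstacle here; the only real decision is to recognize that the right test point is the \emph{fully swapped} vector $v_\tau$ rather than a small Robin--Hood transfer along the segment toward $v_\tau$ (the latter also works, via concavity giving $f(w_s) \geq f(v)$ on that segment, but is strictly more elaborate). The two points worth flagging are that the conclusion is necessarily one-sided and non-strict, in contrast to the sharp ``if and only if'' Schur--Ostrowski criterion for continuously differentiable $f$ quoted above, and that the strict separation $v_i > v_j$ is genuinely used in the final division. I would close by noting that this weaker conclusion is exactly what is available for supergradients, since a point of non-differentiability of $f$ may admit several supergradients, precluding any equality statement when $v_i = v_j$.
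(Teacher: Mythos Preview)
Your proof is correct. It is essentially the paper's argument with the particular choice $\varepsilon = v_i - v_j$, but this choice genuinely streamlines things: the paper takes a partial Robin--Hood transfer $w = v - \varepsilon e_i + \varepsilon e_j$ for $0 < \varepsilon \leq v_i - v_j$, observes $v \succ w$, invokes Schur-concavity of $f$ (itself deduced from symmetry plus concavity) to get $f(w) \geq f(v)$, and then applies the supergradient inequality. By going straight to the fully swapped point $v_\tau$ you replace the majorization/Schur-concavity detour with the one-line identity $f(v_\tau) = f(v)$ from symmetry alone, and you never need convexity of $D$ to place intermediate points on the segment. The paper's version, on the other hand, keeps the argument visibly parallel to the classical Schur--Ostrowski proof and makes the connection to majorization explicit, which fits the surrounding narrative; but as a standalone proof of the stated lemma yours is the cleaner route.
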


\begin{proof}
	For $i = 1,\dots,m$, let $e_i = (\one\{i = j\})_{j = 1}^m$ denote the $i$-th vector of the standard basis of $\R^m$.
	Let $v \in D$ be such that $v_i > v_j$ for some indices $i,j$ and let $0 < \varepsilon \leq v_i - v_j$. Define $w = v - \varepsilon e_i + \varepsilon e_j$. Then $v \succ w$ \citep[by][Chapter 2, Theorem B.6]{MOA11}, because $w$ is obtained from $v$ through a so called `$T$-transformation' \citep[see][p.\ 32]{MOA11}, i.e., $w_i = \lambda v_i + (1-\lambda) v_j$ and $w_j = \lambda v_j + (1-\lambda) v_i$ with $\lambda = \frac{v_i - v_j - \varepsilon}{v_i - v_j}$. By Schur-concavity of $f$, this implies $f(v) \leq f(w)$ and the supergradient inequality \eqref{Eq:GeneralSupergrad} yields
	\[
	\varepsilon (f'_j(v) - f'_i(v)) = \langle f'(v),w - v \rangle \geq f(w) - f(v) \geq 0.
	\]
	Hence, the inequality $f'_j(v) \geq f'_i(v)$ holds.
\end{proof}

With this, there is no need to restrict attention to differentiable entropy functions when applying the Schur-Ostrowski condition in what follows.
Furthermore, true top-$k$ lists can be characterized using majorization.

\begin{Lem}
	Let $Y \sim p$ be distributed according to $p \in \cPY$. The padded distribution $\tilde{t}$ associated with a true top-$k$ list $t \in \myT_k(p)$ majorizes the padded distribution $\tilde{s}$ associated with any calibrated top-$k$ list $s \in \cT_k$.
	\label{Lem:TrueMajor}
\end{Lem}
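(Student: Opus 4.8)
The plan is to reduce everything to partial sums of decreasing rearrangements and to exploit the very rigid shape of the two padded distributions. First I would relabel the classes so that $p_1\ge p_2\ge\dots\ge p_m$; since majorization depends only on the decreasing rearrangements, this is no loss of generality and makes the true top-$k$ set $\hat{Y}=\{1,\dots,k\}$. Because $\pi(t)$ is the average of the $m-k$ smallest probabilities, it satisfies $\pi(t)\le p_k$, so the decreasing rearrangement of the padded distribution is simply
\[
\tilde{t}_{[\,]}=(p_1,\dots,p_k,\underbrace{\pi(t),\dots,\pi(t)}_{m-k}).
\]
The goal is then to verify $\sum_{i=1}^{j}\tilde{t}_{[i]}\ge\sum_{i=1}^{j}\tilde{s}_{[i]}$ for every $j=1,\dots,m-1$, with equality at $j=m$ holding automatically as both are probability vectors.

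The decisive step is to pin down the sorted shape of $\tilde{s}$. Writing $s=(\hat{S},\hat{s})$, calibration means $\hat{s}_y=p_y$ for $y\in\hat{S}$; let $q_1\ge\dots\ge q_k$ denote these values $(p_y)_{y\in\hat{S}}$ in decreasing order. By definition of the padding, the entries of $\tilde{s}$ are $q_1,\dots,q_k$ together with $m-k$ copies of $\pi(s)$. The point on which the whole argument turns is that the confidence scores occupy the \emph{top} $k$ positions of $\tilde{s}$, that is
\[
\tilde{s}_{[\,]}=(q_1,\dots,q_k,\underbrace{\pi(s),\dots,\pi(s)}_{m-k}).
\]
This holds precisely when $q_k=\min_{y\in\hat{S}}\hat{s}_y\ge\pi(s)$, i.e.\ when $s$ is valid; since $\myT_k$ takes values in the valid lists $\tilde{\cT}_k$, this is the pertinent case, and I would carry out the argument for valid $s$ (validity is genuinely needed here, as otherwise a copy of $\pi(s)$ can outrank a confidence score and the ordering below breaks). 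I would also record $P_k=\sum_{i=1}^{k}p_i$ and $Q=\sum_{y\in\hat{S}}p_y=\sum_{i=1}^{k}q_i$, noting $P_k\ge Q$ because $\hat{Y}$ maximizes the $k$-subset probability, together with the identity $\pi(s)-\pi(t)=\tfrac{P_k-Q}{m-k}$, which follows from $\pi(t)=\tfrac{1-P_k}{m-k}$ and $\pi(s)=\tfrac{1-Q}{m-k}$.

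With these shapes in hand the partial-sum inequalities split into two elementary cases. For $j\le k$ both partial sums consist only of probabilities, and since $q_1,\dots,q_k$ are $k$ of the $p_i$ listed in decreasing order one has $q_i\le p_i$ for each $i$, whence $\sum_{i=1}^{j}\tilde{s}_{[i]}=\sum_{i=1}^{j}q_i\le\sum_{i=1}^{j}p_i=\sum_{i=1}^{j}\tilde{t}_{[i]}$. For $k<j\le m-1$ the remaining $j-k$ entries are the padded values, so
\[
\sum_{i=1}^{j}\tilde{t}_{[i]}-\sum_{i=1}^{j}\tilde{s}_{[i]}=(P_k-Q)-(j-k)\bigl(\pi(s)-\pi(t)\bigr)=(P_k-Q)\,\frac{m-j}{m-k}\ge 0,
\]
using the identity for $\pi(s)-\pi(t)$ together with $P_k\ge Q$ and $j\le m-1$. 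This establishes $\tilde{t}\succ\tilde{s}$. The edge cases $k\in\{0,m\}$ are immediate, as then $\cT_k$ admits only one calibrated list, which is the true one. I expect the main obstacle to be not the arithmetic but the structural claim about $\tilde{s}_{[\,]}$: recognizing that validity of $s$ is exactly what places the confidence scores above the padded mass, and is therefore the hypothesis that makes the inequalities for $j>k$ go through.
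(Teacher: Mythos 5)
Your proof is correct and follows essentially the same route as the paper's: sort everything into decreasing rearrangements and check the partial sums directly, splitting at $j=k$. For $j\le k$ the paper likewise uses that the true list's top entries are $p_{[1]},\dots,p_{[k]}$, and for $j>k$ it uses the equivalent bottom-up identity $\sum_{i=1}^{j}\tilde t_{[i]}=1-(m-j)\pi(t)$ together with minimality of $\pi(t)$ among calibrated lists, which is your computation $(P_k-Q)\tfrac{m-j}{m-k}\ge 0$ in disguise.

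The one place where you depart from the paper is your explicit restriction to \emph{valid} $s$, and this is not a cosmetic caveat: the lemma as literally stated (for \emph{any} calibrated top-$k$ list) is false. Take Example \ref{Ex:TopLists}: $p=(0.5,0.2,0.2,0.1)$ and the calibrated but invalid list $s=(\{1,4\},(0.5,0.1))$ give $\tilde s=p$, while the true list $t=(\{1,2\},(0.5,0.2))$ has $\tilde t=(0.5,0.2,0.15,0.15)$, so $\sum_{i=1}^{3}\tilde t_{[i]}=0.85<0.9=\sum_{i=1}^{3}\tilde s_{[i]}$ and $\tilde t\not\succ\tilde s$. The paper's own proof silently makes the same assumption you make explicit: writing $\sum_{i=1}^{\ell}\tilde s_{[i]}=1-(m-\ell)\pi(s)$ for $\ell>k$ requires the $m-\ell$ smallest entries of $\tilde s$ to all equal $\pi(s)$, which is exactly validity of $s$. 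Your restricted version is the statement actually needed downstream (in Theorem \ref{Thm:PaddedSym} the competing list $s$ is assumed valid), so your proof is the right one; just be aware that you have proved a corrected form of the lemma rather than the lemma verbatim.
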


\begin{proof}
	The sum of confidence scores $\sum_{i = 1}^k \tilde{t}_{[i]} = \sum_{i = 1}^k p_{[i]} \geq \sum_{i = 1}^k \tilde{s}_{[i]}$ of a true top-$k$ list is maximal among calibrated top-$k$ lists by definition. Hence, the confidence score $\hat{t}_{[i]} = \tilde{t}_{[i]}$ of the true top-$k$ list $t = (\hat{Y},\hat{t})$ matches the $i$-th largest class probability $p_{[i]}$ for $i = 1,\dots,k$. Therefore, the partial sums $\sum_{i = 1}^{\ell} \tilde{t}_{[i]} = \sum_{i = 1}^{\ell} p_{[i]} \geq \sum_{i = 1}^{\ell} \tilde{s}_{[i]}$ across the largest confidence scores are also maximal for $\ell = 1,\dots,k-1$. Furthermore, the proxy probability $\pi(t) = \frac{1 - \sum_{i = 1}^k \tilde{t}_{[i]}}{m-k}$ associated with a true top-$k$ list is minimal among calibrated top-$k$ lists. Hence, the partial sums 
	\[
	\sum_{i = 1}^{\ell} \tilde{t}_{[i]} = 1 - (m-\ell)\pi(t) \geq 1 - (m-\ell)\pi(s) = \sum_{i = 1}^{\ell} \tilde{s}_{[i]}
	\]
	are maximal for $\ell > k$ .
\end{proof}

\section{Consistent Top List Scores}
\label{Sec:PaddedScores}

Having reviewed the necessary preliminaries, this section shows that the proposed padded symmetric scores constitute a family of consistent scoring functions for the probabilistic top list functionals. The padded symmetric scores are defined for valid top lists and can be extended to invalid top lists by scoring the largest valid sublist, which yields a consistent scoring function. Strict consistency is preserved by adding an additional penalty term to the score of an invalid prediction.

\subsection{Padded symmetric scores}

From now on, let $\myS\colon\cPY \rightarrow \overline{\R}$ be a proper symmetric scoring rule with entropy function $G$.
The scoring rule $\myS$ is extended to valid top-$k$ lists for $k = 0,1,\dots,m-1$ by setting
\[
\myS(t,y) := \myS(\tilde{t},y)
\]
for $y \in \cY,t \in \tilde{\cT}_k$, where $\tilde{t} \in \Delta_{m-1}$ is the padded distribution \eqref{Eq:PaddedDist} associated with the top-$k$ list $t$.
I call the resulting score $\myS\colon \bigcup_{k = 0}^m \tilde{\cT}_k \times \cY \rightarrow \overline\R$ a \emph{padded symmetric score}.
For example, the logarithmic score \eqref{Eq:LogS} yields the \emph{padded logarithmic score}
\begin{equation*} 
\myS_{\log}((\hat{Y},\hat{t}),y) 
=  \begin{cases}
- \log(\hat{t}_y),& \text{if } y \in \hat{Y}, \\
\log(m-k) - \log(1-\sum_{z \in \hat{Y}} \hat{t}_z),& \text{otherwise,}
\end{cases}
\end{equation*}
whereas the Brier score \eqref{Eq:Brier} yields the \emph{padded Brier score}
\begin{equation}\label{Eq:PaddedBrier}
\myS_{\mathrm{B}}((\hat{Y},\hat{t}),y) 
= 1 + \sum_{z\in\hat{Y}} \hat{t}_z^2 + \frac{(1-\sum_{z \in \hat{Y}} \hat{t}_z)^2}{m-k} - 2 \cdot
\begin{cases}
\hat{t}_y,				&\text{if } y \in \hat{Y}, \\
\frac{1-\sum_{z \in \hat{Y}} \hat{t}_z}{m-k},	&\text{otherwise.}
\end{cases}
\end{equation}
The following example shows that padded symmetric scores should not be applied to invalid top lists without further considerations.

\begin{Ex}
	If a padded symmetric score based on a strictly proper scoring rule is used to evaluate the invalid top-2 list $s$ in Example \ref{Ex:TopLists}, it attains a lower expected score than a true top list $t \in \myT_2(p)$, because $\tilde s = p$, whereas $\tilde{t} \neq p$. Hence, the score would fail to be consistent.
\end{Ex}

The following lemma shows that the expected score of a calibrated top list is fully determined by the top list itself and does not depend on (further aspects of) the underlying distribution.

\begin{Lem}
	Let $\myS$ be a padded symmetric score. If $p \in \cPY$ is the true distribution of\/ $Y \sim p$, and\/ $t$ is a calibrated valid top list, then the expected score of the top list\/ $t$ matches the entropy of the padded distribution\/ $\tilde{t}$,
	\[
	\E[\myS(t,Y)] = G(\tilde{t}).
	\]
	\label{Lem:ExpEntropy}
\end{Lem}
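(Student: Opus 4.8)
The plan is to reduce the expected score to the entropy via the Savage representation \eqref{Eq:Savage} and then to show that the resulting cross term vanishes by exploiting calibration together with symmetry. First I would invoke the definition of the padded symmetric score, so that $\E[\myS(t,Y)] = \E[\myS(\tilde{t},Y)]$ for $Y \sim p$. Substituting the Savage representation $\myS(\tilde{t},y) = G(\tilde{t}) - \langle G'(\tilde{t}),\tilde{t}\rangle + G'_y(\tilde{t})$ and using linearity of expectation together with $\E[G'_Y(\tilde{t})] = \sum_{y \in \cY} p_y G'_y(\tilde{t}) = \langle G'(\tilde{t}),p\rangle$, I obtain
\[
\E[\myS(t,Y)] = G(\tilde{t}) + \langle G'(\tilde{t}), p - \tilde{t}\rangle.
\]
It therefore suffices to prove that $\langle G'(\tilde{t}), p - \tilde{t}\rangle = 0$.

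Next I would inspect the difference vector $p - \tilde{t}$ componentwise. Since $t = (\hat{Y},\hat{t})$ is calibrated relative to $p$, the confidence scores match the true class probabilities on the listed classes, $\tilde{t}_y = \hat{t}_y = p_y$ for $y \in \hat{Y}$, so these components of $p - \tilde{t}$ vanish and the inner product collapses to a sum over the unlisted classes $y \notin \hat{Y}$, on which $\tilde{t}$ is constant and equal to the proxy probability $\pi(t)$.

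The key step, and the only place where symmetry is genuinely needed, is to apply Lemma \ref{Lem:SymSuperGrad} pairwise: for any $y,z \notin \hat{Y}$ we have $\tilde{t}_y = \tilde{t}_z = \pi(t)$, hence $G'_y(\tilde{t}) = G'_z(\tilde{t})$, so the supergradient takes a single common value $g$ across the unlisted block. Factoring out $g$, the remaining sum is $g \sum_{y \notin \hat{Y}} (p_y - \pi(t))$, which vanishes because both $p$ and $\tilde{t}$ place the same total mass $1 - \sum_{z \in \hat{Y}} \hat{t}_z$ on the unlisted classes (for $p$ by calibration, for $\tilde{t}$ because $(m-k)\pi(t) = 1 - \sum_{z \in \hat{Y}} \hat{t}_z$). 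Thus $\langle G'(\tilde{t}), p - \tilde{t}\rangle = 0$ and the claim follows. I expect the main obstacle to be precisely this constancy of the supergradient on the unlisted block, which is exactly what Lemma \ref{Lem:SymSuperGrad} supplies and without which the cross term would not generally vanish; the boundary cases are then covered automatically by the same computation, since for $k = m$ calibration forces $\tilde{t} = p$ and for $k = 0$ the distribution $\tilde{t}$ is uniform, each leaving $\langle G'(\tilde{t}), p - \tilde{t}\rangle = 0$.
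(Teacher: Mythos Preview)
Your proof is correct and follows essentially the same approach as the paper: both invoke the Savage representation to reduce the expected score to $G(\tilde{t}) + \langle G'(\tilde{t}), p - \tilde{t}\rangle$, kill the listed-class components via calibration, and then use Lemma~\ref{Lem:SymSuperGrad} to make the supergradient constant on the unlisted block so that the remaining sum vanishes by the mass-balance identity $\sum_{y\notin\hat{Y}} p_y = (m-k)\pi(t)$. Your write-up is slightly more explicit in isolating the cross term as an inner product, but the logic is identical.
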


\begin{proof}
	Let $t = (\hat{Y},\hat{t}) \in \tilde{\cT}_k(p)$. Assume w.l.o.g.\ $k < m$ (the claim is trivial if $k = m$) and let $z \in \cY\setminus\hat{Y}$. By Lemma \ref{Lem:SymSuperGrad} the supergradient at $\tilde{t}$ satisfies $G'_y(\tilde{t}) = G'_z(\tilde{t})$ for all $y \notin \hat{Y}$. Hence, the Savage representation \eqref{Eq:Savage} of the underlying scoring rule yields
	\begin{align*}
	\myE[\myS(t,Y)] &=  G(\tilde{t}) - \langle G'(\tilde{t}), \tilde{t}\rangle 
	+ \sum_{y \in \cY} p_y G'_y(\tilde{t}) \\
	&= G(\tilde{t}) - \sum_{y \in \hat{Y}} (p_y - \hat{t}_y) G'_y(\tilde{t}_y) - \left(\sum_{y \notin \hat{Y}} p_y - (m-k)\pi(t)\right) G'_z(\tilde{t}) = G(\tilde{t}),
	\end{align*}
	because $t$ is calibrated.
\end{proof}

Padded symmetric scores exhibit an interesting property that admits balanced comparison of top list predictions of varying length. A top list score $\myS\colon \bigcup_{k = 0}^m \tilde{\cT}_k \times \cY \rightarrow \overline\R$ exhibits the \emph{comparability property} 
if the expected score does not deteriorate upon extending a true top list, i.e., for $k = 0,1,\dots,m-1$ and any distribution $p\in\cPY$ of $Y \sim p$,
\begin{equation}
\myE[\myS(t_{k+1},Y)] \leq \myE[\myS(t_k,Y)]
\label{Eq:Comparability}
\end{equation}
holds for $t_k \in \myT_k(p)$ and $t_{k+1}\in \myT_{k+1}(p)$. The following theorem shows that padded symmetric scores in fact exhibit the comparability property. I use the comparability property to show consistency of the individual padded symmetric top-$k$ list scores $\myS\vert_{\tilde{\cT}_k \times \cY}$ and to extend these scores to invalid top lists. Section \ref{Sec:Comp} provides further discussion and some numerical insights.

\begin{theorem}
	Padded symmetric scores exhibit the comparability property.
	\label{Thm:CompSym}
\end{theorem}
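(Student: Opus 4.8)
The plan is to exploit the two structural facts already assembled in the preliminaries: Lemma \ref{Lem:ExpEntropy}, which identifies the expected padded score of a true (hence calibrated) top list with the entropy of its padded distribution, and Lemma \ref{Lem:TrueMajor} together with Lemma \ref{Lem:EntropySchur}, which relate majorization of padded distributions to the value of the (Schur-concave) entropy. Fix a distribution $p \in \cPY$ of $Y \sim p$, an index $k \in \{0,\dots,m-1\}$, and true top lists $t_k = (\hat{Y}_k,\hat{t}^{(k)}) \in \myT_k(p)$ and $t_{k+1} = (\hat{Y}_{k+1},\hat{t}^{(k+1)}) \in \myT_{k+1}(p)$. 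Both are calibrated, so by Lemma \ref{Lem:ExpEntropy} the comparability inequality \eqref{Eq:Comparability} reduces to the purely deterministic claim
\[
G(\tilde{t}_{k+1}) \leq G(\tilde{t}_k),
\]
where $\tilde{t}_k$ and $\tilde{t}_{k+1}$ are the respective padded distributions. Since $G$ is Schur-concave by Lemma \ref{Lem:EntropySchur}, it suffices to show the majorization $\tilde{t}_{k+1} \succ \tilde{t}_k$; this is the heart of the proof.

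To establish $\tilde{t}_{k+1} \succ \tilde{t}_k$, I would argue directly from the definition of majorization via partial sums of decreasing rearrangements, closely mirroring the computation in the proof of Lemma \ref{Lem:TrueMajor}. For a true top-$k$ list the $i$-th largest confidence score equals the $i$-th largest class probability $p_{[i]}$, so $\tilde{t}_{k,[i]} = p_{[i]}$ for $i \leq k$ and $\tilde{t}_{k,[i]} = \pi(t_k)$ for $i > k$, and likewise $\tilde{t}_{k+1,[i]} = p_{[i]}$ for $i \leq k+1$ with the remaining entries equal to $\pi(t_{k+1})$. Both vectors have total mass one, so I only need the partial-sum dominance of $\tilde{t}_{k+1}$ over $\tilde{t}_k$ for each $\ell = 1,\dots,m-1$. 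For $\ell \leq k$ the two partial sums coincide (both equal $\sum_{i=1}^{\ell} p_{[i]}$). For $\ell = k+1,\dots,m-1$ I would verify that adding the genuine probability $p_{[k+1]}$ to the $(k{+}1)$-st slot, rather than the averaged proxy mass $\pi(t_k)$, can only increase the partial sum: the key inequality is $p_{[k+1]} \geq \pi(t_{k+1}) \geq \pi(t_k)$, the first because validity of $t_{k+1}$ forces the least listed probability to dominate its proxy, and the second because extending a true list redistributes mass from the uniform tail into a larger listed block, lowering the proxy. A short bookkeeping argument over the tail entries then gives $\sum_{i=1}^{\ell}\tilde{t}_{k+1,[i]} \geq \sum_{i=1}^{\ell}\tilde{t}_{k,[i]}$ throughout.

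The main obstacle is the tail comparison for $\ell > k+1$, where both padded vectors are constant beyond their listed blocks but with different proxy levels and different block lengths, so the decreasing rearrangement must be handled carefully: one must check that $p_{[k+1]}$ really does sit among the top $k+1$ entries of $\tilde{t}_{k+1}$ (it does, by validity) and track how the two uniform tails interleave. I expect the cleanest route is to use the closed form $\sum_{i=1}^{\ell}\tilde{t}_{[i]} = 1 - (m-\ell)\pi(t)$ valid for $\ell \geq k$ (respectively $\ell \geq k+1$), exactly as in Lemma \ref{Lem:TrueMajor}, so that the tail dominance collapses to the single monotonicity statement $\pi(t_{k+1}) \leq \pi(t_k)$. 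Once $\tilde{t}_{k+1} \succ \tilde{t}_k$ is in hand, Schur-concavity of $G$ closes the argument immediately, and the boundary cases $k=0$ (empty list, uniform padded distribution) and $k = m-1$ (extension to a full distribution) follow by the same formulas or by direct inspection.
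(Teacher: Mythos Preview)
Your approach is essentially the paper's: reduce \eqref{Eq:Comparability} to $G(\tilde t_{k+1})\le G(\tilde t_k)$ via Lemma \ref{Lem:ExpEntropy}, then obtain this from Schur-concavity (Lemma \ref{Lem:EntropySchur}) by showing $\tilde t_{k+1}\succ\tilde t_k$. The paper phrases the majorization for a nested extension $\hat Y_{k+1}=\hat Y_k\cup\{z\}$ and then appeals to symmetry, whereas you work directly with the decreasing rearrangements of arbitrary true top lists; both routes are equivalent and rest on the same closed form $\sum_{i\le\ell}\tilde t_{[i]}=1-(m-\ell)\pi(t)$ for $\ell$ beyond the listed block.

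One slip to fix: the displayed chain $p_{[k+1]}\ge\pi(t_{k+1})\ge\pi(t_k)$ has the second inequality reversed. As your own verbal justification (``lowering the proxy'') and your later tail argument correctly state, extending a true list gives $\pi(t_{k+1})\le\pi(t_k)$; what you actually need at $\ell=k+1$ is $p_{[k+1]}\ge\pi(t_k)$, which follows because $p_{[k+1]}$ is the largest of the $m-k$ unlisted probabilities whose average is $\pi(t_k)$ (equivalently, validity of $t_{k+1}$ gives $p_{[k+1]}\ge\pi(t_{k+1})$, and a one-line computation then yields $p_{[k+1]}\ge\pi(t_k)$). With that correction the argument is complete.
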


\begin{proof}
	Let $\myS$ be a padded symmetric score and $G$ be the concave entropy function of the underlying proper scoring rule. 
	Let $Y \sim p$ be distributed according to some distribution $p \in \cPY$ and let $t_k = (\hat{Y}_k,(p_y)_{y \in \hat{Y}_k})$ be a calibrated valid top-$k$ list for some $k = 0,1,\dots,m-1$, which is extended by a calibrated valid top-$(k+1)$ list $t_{k+1} = (\hat{Y}_{k+1},(p_y)_{y \in \hat{Y}_{k+1}})$ in the sense that $\hat{Y}_{k+1} = \hat{Y}_k \cup \{z\}$ for some $z\in\cY$.
	It is easy to verify that
	$\tilde t_{k+1} \succ \tilde t_k$, since $p_z \geq \pi(t_k) \geq \pi(t_{k+1})$. 
	Hence, the inequality $G(\tilde{t}_{k+1}) \leq G(\tilde{t}_k)$ holds by Schur-concavity of $G$ (Lemma \ref{Lem:EntropySchur}), which yields the desired inequality of expected scores by Lemma \ref{Lem:ExpEntropy}.
	
	Clearly, there exists a true top-$(k+1)$ list $t_{k+1}\in \myT_{k+1}(p)$ extending a true top-$k$ list $t_k \in \myT_k(p)$ in the above sense. By a symmetry argument all true top lists of a given length have the same expected score and hence $\myS$ exhibits the comparability property. 
\end{proof}

Note that the proof of Theorem \ref{Thm:CompSym} shows that \eqref{Eq:Comparability} holds for any calibrated valid extension $t_{k+1}$ of a calibrated valid top list $t_k$ and not only true top lists. I proceed to show that padded symmetric scores restricted to valid top-$k$ lists are consistent for the top-$k$ list functional.

\begin{Thm}
	Let $k \in \{0,1,\dots,m\}$ be fixed and $\myS\colon \bigcup_{\ell = 0}^m \tilde{\cT}_\ell \times \cY \rightarrow \overline\R$ be a padded symmetric score. 
	Then the restriction $\myS\vert_{\tilde{\cT}_k \times \cY}$ of the score $\myS$ to the set of valid top-$k$ lists $\tilde{\cT}_k$ is consistent for the top-$k$ list functional $\myT_k$. It is strictly consistent if the underlying scoring rule $\myS\vert_{\cPY \times \cY}$ is strictly proper.
	\label{Thm:PaddedSym}
\end{Thm}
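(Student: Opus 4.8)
The plan is to combine two ingredients: strict properness, to deal with \emph{miscalibrated} predictions, and the majorization theory of Section~\ref{Sec:MathPrelim}, to deal with the choice of listed classes. I fix $Y\sim p$, a true top-$k$ list $t\in\myT_k(p)$, and an arbitrary valid top-$k$ list $s=(\hat S,\hat s)\in\tilde{\cT}_k$, and aim for $\E[\myS(t,Y)]\le\E[\myS(s,Y)]$, with equality (in the strictly proper case) only when $s\in\myT_k(p)$. I may assume $k<m$, the case $k=m$ being exactly properness since $\cT_m\equiv\cPY$. The bridge will be the calibrated list $s^*=(\hat S,(p_y)_{y\in\hat S})$ on the same class set: its padded distribution $\tilde{s^*}$ agrees with $p$ on $\hat S$ and carries the averaged complement mass $\pi(s^*)=\frac{1}{m-k}\sum_{y\notin\hat S}p_y$ on each unlisted class.

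The first and main step is to show $\E[\myS(s,Y)]\ge G(\tilde{s^*})$, i.e.\ that replacing $s$ by its calibrated counterpart does not increase the expected score. The padded distribution $\tilde s$ is constant off $\hat S$, so by symmetry of $\myS$ the score $\myS(\tilde s,y)$ takes a common value for all unlisted $y$ (apply the defining invariance to a transposition of two unlisted classes, which leaves $\tilde s$ fixed). Since $\tilde{s^*}$ assigns the same \emph{total} mass to the complement as $p$, this common-value structure makes the expected score of $\tilde s$ insensitive to how the complement mass is spread, giving $\E_{Y\sim p}[\myS(\tilde s,Y)]=\E_{\tilde Y\sim\tilde{s^*}}[\myS(\tilde s,\tilde Y)]$. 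Now properness \eqref{Eq:Proper2}, applied with true distribution $\tilde{s^*}$, yields $\E_{\tilde Y\sim\tilde{s^*}}[\myS(\tilde s,\tilde Y)]\ge\E_{\tilde Y\sim\tilde{s^*}}[\myS(\tilde{s^*},\tilde Y)]=G(\tilde{s^*})$. This transfer from $p$ to $\tilde{s^*}$ via symmetry, followed by an appeal to properness, is the heart of the argument and where I expect the real work to lie.

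It then remains to compare the two calibrated lists $t$ and $s^*$. As $t$ is calibrated and valid, Lemma~\ref{Lem:ExpEntropy} gives $\E[\myS(t,Y)]=G(\tilde t)$. By Lemma~\ref{Lem:TrueMajor} the padded distribution of a true list majorizes that of any calibrated list, so $\tilde t\succ\tilde{s^*}$, and Schur-concavity of the entropy (Lemma~\ref{Lem:EntropySchur}) gives $G(\tilde t)\le G(\tilde{s^*})$. Chaining the displays produces $\E[\myS(t,Y)]=G(\tilde t)\le G(\tilde{s^*})\le\E[\myS(s,Y)]$, which is consistency.

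For strict consistency I assume $\myS$ strictly proper, so $G$ is strictly Schur-concave and \eqref{Eq:Proper2} is strict off the diagonal. Equality $\E[\myS(t,Y)]=\E[\myS(s,Y)]$ forces equality in both inequalities: strictness in the reduction step forces $\tilde s=\tilde{s^*}$, so $s$ is calibrated, while strict Schur-concavity with $\tilde t\succ\tilde{s^*}$ forces $\tilde{t}_{[\,]}=\tilde{s^*}_{[\,]}$. Validity of $s$ means its listed (hence, by calibration, true) probabilities dominate its proxy, so the $k$ largest entries of $\tilde{s^*}$ are exactly $(p_y)_{y\in\hat S}$; equality of the sorted vectors then yields $\sum_{y\in\hat S}p_y=\sum_{i=1}^k p_{[i]}$, i.e.\ $\hat S$ maximizes the top-$k$ mass and $s\in\myT_k(p)$. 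Beyond the reduction step, I expect this final identification of the equality case with membership in $\myT_k(p)$ to be the only delicate point.
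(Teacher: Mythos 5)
Your first step is sound, and it is actually a cleaner device than the one in the paper: where the paper bounds $\E[\myS(s,Y)]$ from below by an explicit supergradient computation (Savage representation plus the Schur--Ostrowski condition of Lemma~\ref{Lem:SchurOstrowski}), your symmetry-plus-properness argument delivers $\E[\myS(s,Y)]\geq G(\tilde{s^*})$ in two lines. The gap is in your second step. The calibrated counterpart $s^*=(\hat S,(p_y)_{y\in\hat S})$ of a \emph{valid} list $s$ need not itself be valid: validity of $s$ controls $\hat s_y$ against $\pi(s)$, not $p_y$ against $\pi(s^*)$. Lemma~\ref{Lem:TrueMajor}, although phrased for ``any calibrated top-$k$ list'', is proved (and used in the paper) only for valid ones --- its proof computes $\sum_{i=1}^{\ell}\tilde s_{[i]}=1-(m-\ell)\pi(s)$ for $\ell>k$, which presupposes that the $m-\ell$ smallest padded entries are all proxies. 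For invalid calibrated lists the majorization genuinely fails, and so does your inequality $G(\tilde t)\leq G(\tilde{s^*})$. Concretely, take $m=3$, $p=(0.5,0.499,0.001)$, $k=1$, and $s=(\{3\},(1/3))$, which is valid since $1/3=\pi(s)$. Then $s^*=(\{3\},(0.001))$ is invalid, $\tilde{s^*}=(0.4995,0.4995,0.001)$ and $\tilde t=(0.5,0.25,0.25)$, so $\tilde t\not\succ\tilde{s^*}$ (the second partial sums are $0.75$ versus $0.999$), and for the Brier entropy $G(\tilde{s^*})\approx 0.501<0.625=G(\tilde t)$. Your chain $G(\tilde t)\leq G(\tilde{s^*})\leq\E[\myS(s,Y)]$ thus breaks in the middle, even though the theorem is true here: $\E[\myS_{\mathrm{B}}(s,Y)]=2/3>0.625$.

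The repair is the paper's: pass from $s^*$ to its \emph{largest valid sublist} $r'$ (in the example, the empty list with $\tilde{r}'=(1/3,1/3,1/3)$ and $G(\tilde{r}')=2/3$), establish $\E[\myS(s,Y)]\geq\E[\myS(r',Y)]=G(\tilde{r}')$, then descend from $r'$ to a true top-$k'$ list by Lemma~\ref{Lem:TrueMajor} --- now legitimately, since $r'$ is calibrated and valid --- and return to length $k$ via the comparability property (Theorem~\ref{Thm:CompSym}). Note that your symmetry-plus-properness trick does not directly yield the bound against $G(\tilde{r}')$, because $\tilde{r}'$ no longer agrees with $p$ on all of $\hat S$; this is exactly where the paper's supergradient computation, with its sign analysis of $p_y-\pi(r')$ on $\hat S\setminus\hat S'$, earns its keep. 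The same defect propagates into your equality analysis for strict consistency, which also leans on $\tilde t\succ\tilde{s^*}$.
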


\begin{proof}
	Let $p = (p_y)_{y \in \cY} \in \cPY$ be the true probability distribution of $Y \sim p$.
	Clearly, all true top-$k$ lists in $\myT_k(p)$ attain the same expected score by symmetry of the underlying scoring rule. Let $t = (\hat{Y},(p_y)_{y \in \hat{Y}}) \in \myT_k(\hat{p})$ be a true top-$k$ list and $s = (\hat{Z},(\hat{s}_y)_{y \in \hat{Z}}) \in \tilde{\cT}_k$ be an arbitrary valid top-$k$ list. To show consistency of $\myS\vert_{\tilde{\cT}_k \times\cY}$, it suffices to show that the valid top-$k$ list $s$ does not attain a lower (i.e., better) expected score than the true top-$k$ list $t$. Strict consistency follows if the expected score of any $s \notin \myT_k(p)$ is higher than that of the true top-$k$ list $t$.
	
	First, consider $s \notin \myT_k(p)$ to be a calibrated top-$k$ list, i.e., $\hat{s}_y = p_y$ for all $y \in \hat{Z}$. Since $\tilde t$ majorizes $\tilde s$ by Lemma \ref{Lem:TrueMajor}, the inequality 
	\[
	\E [\myS(t,Y)] = G(\tilde t) \leq G(\tilde s) = \E [\myS(s,Y)]
	\]
	holds by Schur-concavity of the entropy function $G$ (Lemma \ref{Lem:EntropySchur}) and Lemma \ref{Lem:ExpEntropy}.
	If the underlying scoring rule is strictly proper, the entropy function is strictly (Schur-)concave, and hence the inequality is strict.
	
	Now, consider $s$ to be an uncalibrated top-$k$ list and let $r = (\hat{Z},(p_y)_{y \in \hat{Z}
	})$ be the respective calibrated top-$k$ list on the same classes.
	The calibrated top-$k$ list $r$ may not be valid and cannot be scored if this is the case. However, its largest valid sublist $r' = (\hat{Z}', (p_y)_{y \in \hat{Z}'})$ with $\hat{Z}' \subseteq \hat{Z}$ can be scored. 
	Let $z \in \cY\setminus \hat{Z}$. The difference in expected scores
	\begin{align}
	&\E [\myS(s, Y)] - \E [\myS(r',Y)] \notag \\
	&= G(\tilde s) - G(\tilde r') - \langle G'(\tilde s), \tilde s \rangle + \langle G'(\tilde  r'),  \tilde r' \rangle + \sum_{y \in \cY} p_y (G_y'(\tilde s) - G_y'(\tilde  r')) \tag{by the Savage representation \eqref{Eq:Savage}} \\ 
	&\geq \langle G'(\tilde  r') - G'(\tilde s),  \tilde r' \rangle + \sum_{y \in \cY} p_y (G_y'(\tilde s) - G_y'(\tilde r')) \tag{by the supergradient inequality \eqref{Eq:Supergrad}} \\ 
	&= \sum_{y \in \hat{Z} \setminus \hat{Z}'} (p_y - \pi(r') ) (G_y'(\tilde s) - G_z'(\tilde  r')) + \sum_{y \in \cY \setminus \hat{Z}} (p_y - \pi(r') ) (G_z'(\tilde s) - G_z'(\tilde  r')) \tag{by Lemma \ref{Lem:SymSuperGrad}} \\ 
	&= \sum_{y \in \hat{Z} \setminus \hat{Z}'} (p_y - \pi(r') ) (G_y'(\tilde s) - G_z'(\tilde s)) \tag{as $\sum_{y \in \cY \setminus \hat{Z}} (p_y - \pi(r')) = - \sum_{y \in \hat{Z} \setminus \hat{Z}'} (p_y - \pi(r'))$} 
	\end{align}
	is nonnegative by the fact that $(p_y - \pi(r') ) \leq 0$ for $y \in \hat{Z}\setminus\hat{Z}'$ (since $r'$ is the largest valid sublist) and Lemma \ref{Lem:SchurOstrowski} (and Lemma \ref{Lem:SymSuperGrad} if $\hat{s}_y = \pi(s)$ for some $y \in \hat{Z}\setminus\hat{Z}'$).
	
	Let $k' = \vert \hat{Z}'\vert$. Then, $r'$ scores no better than a true top-$k'$ list $t_{k'} \in \myT_{k'}(p)$, which in turn scores no better than $t$ by the comparability property. Therefore,
	\[
	\E [\myS(s, Y)] \geq \E [\myS(r',Y)] \geq \E [\myS(t_{k'}, Y)] \geq \E [\myS(t,Y)]
	\]
	holds. 	
	If the underlying scoring function is strictly proper, the difference in expected scores $\E [\myS(s, Y)] - \E [\myS(r',Y)]$ above is strictly positive by strictness of the supergradient inequality \citep[][Satz 5.1.12]{Jun15}, and hence $\E [\myS(t,Y)] < \E [\myS(s, Y)]$ holds in this case, which concludes the proof.
\end{proof}

\subsection{Penalized extensions of padded symmetric scores}

The comparability property can be used to extend a padded symmetric score $\myS$ to invalid top lists in a consistent manner. To this end, recall that $t'$ denotes the largest valid sublist of a top list $t = (\hat{Y},\hat{t}) \in \cT_k$. Assigning the score of the largest valid sublist to an invalid top-$k$ list yields a consistent score by the comparability property.  Strict consistency of the padded symmetric score $\myS$ is preserved by adding a positive penalty term $c_{\mathrm{invalid}} > 0$ to the score of the largest valid sublist in the case of an invalid top list prediction. I call the resulting score extension $\myS\colon \bigcup_{k = 0}^m \cT_k \times \cY \rightarrow \overline\R$, which assigns the score
\begin{equation} \label{Eq:PenExt}
\myS(t,y) = \myS(t',y) + c_{\mathrm{invalid}}
\end{equation}
to an invalid top list $t \in \cT_k \setminus \tilde{\cT}_k$ for $k = 1,2,\dots,m-1$, a \emph{penalized extension} of a padded symmetric score.
The following example illustrates that the positive penalty is necessary to obtain a strictly consistent scoring function.

\begin{Ex}
	Consider a setting similar to that of Example \ref{Ex:TopLists} with $Y \sim p = (0.4,0.2,0.2,0.2)$. The padded distribution associated with the largest valid sublist $t' = (\{1\},(0.4))$ of the invalid list $t = (\{1,2\},(0.4,0.1))$ matches the true distribution, $\tilde{t}' = p$, and hence the expected score of $t$ in \eqref{Eq:PenExt} is minimal if $c_{\mathrm{invalid}} = 0$.
\end{Ex}

The following theorem summarizes the properties of the proposed score extension.

\begin{theorem} Let $k \in \{0,1,\dots,m\}$ be fixed and $\myS\colon\bigcup_{\ell = 0}^m \cT_\ell \times \cY \rightarrow \overline\R$ be a penalized extension \eqref{Eq:PenExt} of a padded symmetric score with penalty term $c_{\mathrm{invalid}} \geq 0$.
	Then the restriction $\myS\vert_{\cT_k \times \cY}$ of the score $\myS$ to the set of top-$k$ lists $\cT_k$ is consistent for the top-$k$ list functional $\myT_k$. It is strictly consistent if the underlying scoring rule $\myS\vert_{\cPY \times \cY}$ is strictly proper and the penalty term $c_{\mathrm{invalid}}$ is nonzero.
	\label{Thm:PenSym}
\end{theorem}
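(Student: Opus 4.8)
The plan is to deduce the statement from the two facts already established for valid top lists: the consistency (and strict consistency) of padded symmetric scores on $\tilde{\cT}_k$ from Theorem \ref{Thm:PaddedSym}, and the comparability property from Theorem \ref{Thm:CompSym}. Fix the true distribution $p \in \cPY$ of $Y \sim p$ and a true top-$k$ list $t \in \myT_k(p)$. Since every true top-$k$ list is valid, we have $t \in \tilde{\cT}_k$, so $\myS(t,\cdot)$ carries no penalty and $\E[\myS(t,Y)]$ is just the expected padded symmetric score. It therefore suffices to show that no competitor $s \in \cT_k$ scores strictly below $t$ in expectation, with strict inequality for $s \notin \myT_k(p)$ under the additional hypotheses. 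The boundary cases $k = 0$ and $k = m$ contain no invalid lists, so there the penalized extension coincides with the padded symmetric score and the claim is already Theorem \ref{Thm:PaddedSym}; I would then split the remaining argument according to whether $s$ is valid or invalid.

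If $s \in \tilde{\cT}_k$ is valid, the penalized extension agrees with the padded symmetric score on $s$, so Theorem \ref{Thm:PaddedSym} directly gives $\E[\myS(t,Y)] \leq \E[\myS(s,Y)]$, with strict inequality whenever the underlying rule is strictly proper and $s \notin \myT_k(p)$. The substantive case is an invalid $s \in \cT_k \setminus \tilde{\cT}_k$, where definition \eqref{Eq:PenExt} yields $\E[\myS(s,Y)] = \E[\myS(s',Y)] + c_{\mathrm{invalid}}$ for the largest valid sublist $s'$, which is a valid top-$k'$ list with $k' < k$. Applying the consistency part of Theorem \ref{Thm:PaddedSym} to the top-$k'$ functional bounds $\E[\myS(s',Y)]$ from below by the common expected score of a true top-$k'$ list $t_{k'} \in \myT_{k'}(p)$, and iterating the comparability property (Theorem \ref{Thm:CompSym}) from length $k'$ up to $k$ gives $\E[\myS(t_{k'},Y)] \geq \E[\myS(t,Y)]$. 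Chaining these bounds produces $\E[\myS(s',Y)] \geq \E[\myS(t,Y)]$, hence $\E[\myS(s,Y)] \geq \E[\myS(t,Y)] + c_{\mathrm{invalid}} \geq \E[\myS(t,Y)]$ because $c_{\mathrm{invalid}} \geq 0$; when $c_{\mathrm{invalid}} > 0$ the final inequality is strict.

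The positive penalty is precisely what is indispensable in the invalid case: as the preceding example shows, an invalid list whose largest valid sublist already equals the true distribution would otherwise tie with $t$, so strict consistency cannot be recovered from strict properness of the scoring rule alone and must be supplied by $c_{\mathrm{invalid}}$. I expect the only delicate step to be the bookkeeping in the invalid case — namely checking that the largest valid sublist $s'$ is genuinely a valid top-$k'$ list to which Theorem \ref{Thm:PaddedSym} applies, and that the comparability property may be legitimately invoked to descend from length $k'$ to length $k$. Once that chain is in place, the penalty term delivers strictness in the invalid case essentially for free, and the valid case is immediate from Theorem \ref{Thm:PaddedSym}.
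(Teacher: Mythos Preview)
Your proposal is correct and follows essentially the same approach as the paper's own proof: reduce to Theorem~\ref{Thm:PaddedSym} for valid competitors, and for invalid ones combine consistency of the padded score at length $k'$ with the comparability property (Theorem~\ref{Thm:CompSym}) to bound $\E[\myS(s',Y)] \geq \E[\myS(t,Y)]$, so that the penalty term delivers (strict) consistency. Your write-up is somewhat more explicit about the boundary cases and the chaining, but the argument is the same.
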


\begin{proof}
	In light of Theorem \ref{Thm:PaddedSym}, it remains to show that an invalid top-$k$ list attains a worse expected score than a true top-$k$ list $t \in \myT_k(p)$ under the true distribution $p \in \cPY$ of $Y \sim p$. To this end, let $s \in \cT_k$ be invalid. By construction of the penalized extension, the top list $s$ is assigned the score of its largest valid sublist $s'$ plus the additional penalty $c_{\mathrm{invalid}}$. By consistency of the padded symmetric score and the comparability property, the expected score of $s'$ cannot fall short of the expected score of $t$. Hence, $\myS\vert_{\cT_k \times \cY}$ is consistent for the top-$k$ list functional. If a positive penalty $c_{\mathrm{invalid}} > 0$ is added, the score extension is strictly consistent given a strictly consistent padded symmetric score.
\end{proof}

\section{Comparability}
\label{Sec:Comp}

The comparability property \eqref{Eq:Comparability} ensures that additional information provided by an extended true top list does not adversely influence the expected score. The information gain is quantified by a reduction in entropy, which depends on the underlying scoring rule. Ideally, a top list score encourages the prediction of classes that account for a substantial portion of probability mass, while offering little incentive to provide unreasonably large top lists. In what follows, I argue that the padded Brier score satisfies this requirement.

Let $\myS$ be a padded symmetric score with entropy function $G$ (of the underlying proper scoring rule). Furthermore, let $1\leq k<m$ and $t = (\hat{Y},(\hat{t}_y)_{y\in\hat{Y}})$ be a top-$k$ list that accounts for most of the probability mass. In particular, assume that the unaccounted probability $\alpha = \alpha(t) = 1 - \sum_{y \in \hat{Y}} \hat{t}_y$ is less than the least confidence score but nonzero, i.e.,
\begin{equation}\label{Eq:AssAlpha}
0 < \alpha < \min_{y \in \hat{Y}} \hat{t}_y. 
\end{equation}
Let $Q = Q(t) = \{p \in \cPY \mid t \in \myT_k(p)\}$ be the set of all probability measures relative to which $t$ is a true top-$k$ list. Let $p \in Q$ assign the remaining probability mass $\alpha$ to a single class. Then $p$ majorizes any $q \in Q$, and the distribution $p$ has the lowest entropy, i.e., $G(p) = \min_{q\in Q} G(q)$, by Schur-concavity of the entropy function (Lemma \ref{Lem:EntropySchur}). As the expected score of the top list $t$ is invariant under distributions in $Q$ by Lemma \ref{Lem:ExpEntropy}, the relative difference in expected scores between the true top list $t$ and the true distribution $q \in Q$ is bounded by the relative difference in expected scores between $t$ and $p$,
\[
\frac{G(\tilde{t}) - G(q)}{G(q)} \leq \frac{G(\tilde{t}) - G(p)}{G(p)}.
\]
The upper bound can be simplified by bounding the entropy of $p$ from below, as $G(p) \geq G((1-\alpha,\alpha,0,\dots,0))$ by Schur-concavity of $G$.

If $\myS = \myS_{\mathrm{B}}$ is the padded Brier score \eqref{Eq:PaddedBrier} with entropy \eqref{Eq:BrierEntropy}, the lower bound reduces to $G(p) \geq G((1-\alpha,\alpha,0,\dots,0)) = 2(\alpha - \alpha^2) > \alpha$, since $\alpha < 0.5$ by assumption \eqref{Eq:AssAlpha} and hence $2\alpha^2 < \alpha$. With this the relative difference in expected scores has a simple upper bound,
\[
\frac{G(\tilde{t}) - G(p)}{G(p)} = \frac{\alpha^2 - \alpha \pi(t)}{2(\alpha - \alpha^2)} < \frac{\alpha^2}{\alpha} = \alpha.
\]
For the padded logarithmic score no such bound exists and the deviation of the expected top list score from the optimal score can be severe, as illustrated in the following numerical example.
The example sheds some light on the behavior of the (expected) padded symmetric scores and demonstrates that top lists of length $k > 1$ may provide valuable additional information over a simple mode prediction.

\begin{example} \label{Ex:PaddedScores}
	\begin{table}[t]
		\caption{Expected padded Brier scores and expected padded logarithmic scores of various types of true predictions and multiple distributions discussed in Example \ref{Ex:PaddedScores}. Relative score differences (in percent) with respect to the optimal scores are in brackets.}
		\label{Tab:ExScores}
		\vspace{2mm}
		\centering 
		\begin{tabular}{cccccc}
			\toprule
			&& \multicolumn{4}{c}{$\myE[\myS(\cdot,Y)]$} \\
			$p$ & $\myS$ & $\operatorname{Mode}(p)$ & $\myT_1(p)$ & $\myT_2(p)$ & $p$ \\
			\midrule
			$p^{(\mathrm h)}$ & $\myS_\mathrm{B}$ & 0.02 (1.01\%) & 0.0199 (0.38\%) & 0.0198 (0\%) & 0.0198 \\ 
			$p^{(\mathrm m)}$ & $\myS_\mathrm{B}$ & 1 (70.59\%) & 0.6875 (17.28\%) & 0.5867 (0.08\%) & 0.5862 \\ 
			$p^{(\mathrm l)}$ & $\myS_\mathrm{B}$ & 1.5 (88.87\%) & 0.7969 (0.34\%) & 0.7955 (0.16\%) & 0.7942 \\ 
			\midrule
			$p^{(\mathrm h)}$ & $\myS_{\log}$ & $\infty$ & 0.0699 (24.75\%) & 0.0560 (0\%) & 0.0560 \\
			$p^{(\mathrm m)}$ & $\myS_{\log}$ & $\infty$ & 1.3863 (32.49\%) & 1.0532 (0.66\%) & 1.0463 \\
			$p^{(\mathrm l)}$ & $\myS_{\log}$ & $\infty$ & 1.6021 (0.45\%) & 1.5984 (0.23\%) & 1.5948 \\
			\bottomrule
		\end{tabular}
	\end{table}
	
	Suppose there are $m = 5$ classes labeled $1,2,\dots, 5$ and the true (conditional) distribution $p = p(\mathbf{x}) = \cL(Y\mid \bX = \mathbf{x})$ of $Y$ (given a feature vector $\mathbf{x} \in \cX$) is known. Table \ref{Tab:ExScores} features expected padded Brier and logarithmic scores of various types of truthful predictions under several distributions, as well as relative differences with respect to the optimal score.
	The considered distributions
	\begin{equation*}
	\begin{gathered}
	p^{(\mathrm{h})} = (0.99,0.01,0,0,0),
	\quad p^{(\mathrm{m})} = (0.5,0.44,0.03,0.02,0.01), 
	\\
	p^{(\mathrm{l})} = (0.25,0.22,0.2,0.18,0.15).
	\end{gathered}
	\end{equation*}
	exhibit varying degrees of predictability.
	Distribution $p^{(\mathrm h)}$ exhibits high predictability in the sense that a single class can be predicted with high confidence. Distribution $p^{(\mathrm m)}$ exhibits moderate predictability in that it is possible to narrow predictions down to a small subset of classes with high confidence, but getting the class exactly right is a matter of luck. Distribution $p^{(\mathrm l)}$ exhibits low predictability in the sense that all classes may well realize.
	Predictions are of increasing information content. The first prediction is the true mode, i.e., a hard classifier without uncertainty quantification that predicts class 1 under all considered distributions. The hard mode is interpreted as assigning all probability mass to the predicted class. Scores are obtained by embedding the predicted class in the probability simplex or, equivalent, by scoring the top-1 list $(\{1\},1)$. The second prediction is the true top-1 list $(\{1\},p_1)$, i.e., the mode with uncertainty quantification. The third prediction is the true top-2 list $(\{1,2\},(p_1,p_2))$ and the final prediction is the true distribution $p$ itself.
	
	By consistency of the padded symmetric scores, the true top-1 lists score better in expectation than the mode predictions and by the comparability property, the true top-2 lists score better than the top-1 lists, while the true distributions attain the optimal scores.
	The mode predictions perform significantly worse than the probabilistic predictions, which highlights the importance of truthful uncertainty quantification. Note that the log score assigns an infinite score in case of the true outcome being predicted as having zero probability, hence the mode prediction is assigned an infinite score with positive probability. 
	
	The expected padded Brier score of the probabilistic top-1 list under the highly predictable distribution $p^{(\mathrm{h})}$ is not far from optimal, whereas the respective logarithmic score is inflated by the discrepancies between the padded and true distributions, even though the top list accounts for most of the probability mass ($\alpha = 0.01$). Deviations from the optimal scores are more pronounced under the logarithmic score in all considered cases.
	
	Under the distribution exhibiting moderate predictability, the top-2 list prediction is much more informative than the top-1 list prediction, which results in a significantly improved score that is not far from optimal. 
	Under the distribution exhibiting low predictability, all probabilistic predictions perform well, as there is little information to be gained.
\end{example}

Estimation of small probabilities is frequently hindered by finite sample size. The specification of top list predictions in conjunction with the padded Brier score circumvents this issue, as the Brier score is driven by absolute differences in probabilities, whereas the logarithmic score emphasizes relative differences in probabilities. In other words, the padded distribution is deemed a good approximation of the true distribution if the true top list accounts for most of the probability mass by the Brier score.

In light of these considerations, I conclude that the padded Brier score is suitable for the comparison of top list predictions of varying length.

\section{Concluding Remarks}
\label{Sec:ConclusionTopLists}

In this paper, I argued for the use of evaluation metrics rewarding truthful probabilistic assessments in classification. To this end, I introduced the probabilistic top list functionals, which offer a flexible probabilistic framework for the general classification problem. Padded symmetric scores yield consistent scoring functions, which admit comparison of various types of predictions. The padded Brier score appears particularly suitable, as top lists accounting for most of the probability mass obtain an expected padded Brier score that is close to optimal. 

The entropy of a distribution is a measure of uncertainty or information content. Majorization provides a relation characterizing common decreases in entropy shared by all symmetric proper scoring rules.  In particular, for two distributions $p \in \cPY$ and $q \in \cPY$, the entropy of the distribution $p$ does not exceed the entropy of $q$, i.e., $G(p) \leq G(q)$, if $p$ majorizes $q$. The inequality is strict if the scoring rule is strictly proper and $q$ is not a permutation of $p$.

Similar to probabilistic top-$k$ lists, a probabilistic top-$\beta$ list with $\beta \in (0,1)$ may be defined as a minimal top list accounting for a probability mass of at least $\beta$. However, the padded symmetric scores proposed in this paper are not consistent for the top-$\beta$ list functional, and the question whether this functional is elicitable constitutes an open problem for future research.

As a simple alternative to the symmetric padded scores proposed in this paper, top-$k$ error \citep{YK20} is also a consistent scoring function for the top-$k$ list functional, however, it is not strictly consistent, as it does not evaluate the confidence scores. On a related note, strictly proper scoring rules are essentially top-$k$ consistent surrogate losses in the sense of \cite{YK20}. The idea of a consistent surrogate loss is to find a loss function that is easier to optimize than the target accuracy measure such that the confidence scores optimize accuracy. However, confidence scores need not represent probabilities. In contrast, strictly proper scoring rules elicit probabilities. Essentially, strictly proper scoring rules are consistent surrogates for any loss or scoring function that is consistent for a statistical functional.

Typically, classes cannot simply be averaged. Therefore, combining multiple class predictions may be difficult, as majority voting may result in a tie, while learning individual voting weights or a meta-learner requires training data \citep[see][Section 8.3 for a review of classifier combination techniques]{KZP06}. Probabilistic top lists facilitate the combination of multiple predictions, as confidence scores can simply be averaged, which may be an easy way to improve the prediction.

The prediction of probabilistic top lists appears particularly useful in problems, where classification accuracy is not particularly high, as is frequently the case in multi-label classification.
Probabilistic predictions are an informative alternative to classification with reject option. Furthermore, if it is possible to predict top lists of arbitrary length, the empty top-0 list may be seen as a reject option.
Shifting focus towards probabilistic predictions may well increase prediction quality and usefulness in various decision problems, where misclassification losses are not uniform. The padded symmetric scores serve as general purpose evaluation metrics that account for the additional value provided by probabilistic assessments. Applying the proposed scores in a study with real predictions (e.g., the study conducted by \cite{LPAWQ20}) is left as a topic for future work.

\bibliographystyle{abbrvnat}
\bibliography{manuscript}

\end{document}